\theoremstyle{plain} 
\newtheorem{thm}{Theorem}
\theoremstyle{definition}
\theoremstyle{remark} 
\newtheorem*{astep}{A-step}
\newtheorem*{pstep}{P-step}
\newtheorem*{cstep}{C-step}
\newcommand{\prob}{\mathsf{P}}
\newcommand{\pl}{\mathsf{pl}}
\newcommand{\mpl}{\mathsf{mpl}}
\newcommand{\unif}{{\sf Unif}}
\newcommand{\nm}{{\sf N}}
\newcommand{\chisq}{{\sf ChiSq}}
\newcommand{\RR}{\mathbb{R}}
\newcommand{\YY}{\mathbb{Y}}
\newcommand{\UU}{\mathbb{U}}
\renewcommand{\S}{\mathcal{S}}
\newcommand{\iid}{\overset{\text{\tiny iid}}{\,\sim\,}}
\newcommand{\ind}{\overset{\text{\tiny ind}}{\,\sim\,}}
\title{Generalized inferential models for meta-analyses based on few studies}
\author{Joyce Cahoon\footnote{Department of Statistics, North Carolina State University; {\tt jyu21@ncsu.edu}, {\tt rgmarti3@ncsu.edu}} \quad \text{and} \quad Ryan Martin$^*$}
\date{\today}
\begin{document}

\maketitle 

\begin{abstract}   
Meta-analysis based on only a few studies remains a challenging problem, as an accurate estimate of the between-study variance is apparently needed, but hard to attain, within this setting.  Here we offer a new approach, based on the {\em generalized inferential model} framework, whose success lays in marginalizing out the between-study variance, so that an accurate estimate is not essential.  We show theoretically that the proposed solution is at least approximately valid, with numerical results suggesting it is, in fact, nearly exact.  We also demonstrate that the proposed solution outperforms existing methods across a wide range of scenarios.  
\smallskip

\emph{Keywords and phrases:} confidence interval; Monte Carlo; normal random effects model; plausibility function; profile likelihood.
\end{abstract}

\section{Introduction}
\label{S:intro}  

The most important scientific questions are likely to be pursued by multiple researchers, resulting in separate analyses that, when appropriate, can be combined via a single {\em meta-analysis} to attain stronger and more definitive conclusions.  But even when it is appropriate to combine the results from multiple studies, there is often a non-negligible amount of between-study heterogeneity, which is difficult to estimate accurately when the number of studies for meta-analysis is small.  Interestingly, meta-analyses with as few as three studies are the norm, not the exception \citep{davey2011}, so there is considerable interest in developing improved methods for inference in this setting of combining results from just a few heterogeneous studies. 


To set the scene, consider the classical normal--normal random-effects model where each study $k$ included for meta-analysis provides data $(Y_k, \sigma_k^2)$. These are modeled as 
\begin{equation}
(Y_k \mid M_k) \ind \nm(M_k, \sigma_k^2), \quad M_k \iid \nm(\mu, \nu), \quad k=1,\ldots,K.
\label{eq:generative}
\end{equation}
Here, $M_k$ denotes the random effect from study $k$, $\sigma_k^2 > 0$ the variance within study $k$, $\mu$ the underlying population effect and $\nu > 0$ the between-study variance.  This hierarchical formulation of the model indicates that the $K$ studies have something in common, namely, a tendency toward $\mu$, but that they are not fully homogeneous and that this degree of heterogeneity, controlled by $\nu$, is unknown.  Therefore, the goal is inference on the unknown mean $\mu$, with $\nu$ as an unknown nuisance parameter.  

In the simple case where $\nu$ is {\em known}, the meta-analysis is straightforward.  Marginally, $Y_k$ are independent, distributed as $\nm(\mu, \sigma_k^2 + \nu)$, for $k=1,\ldots,K$.  It is easy to check that the minimum variance unbiased estimator of $\mu$ is 
\[ \hat\mu(\nu) = \frac{\sum_{k=1}^K w_k(\nu) \, Y_k}{\sum_{k=1}^K w_k(\nu)}, \quad \text{where} \quad w_k(\nu) = (\sigma_k^2 + \nu)^{-1}, \]
and its variance is $\{\sum_{k=1}^K w_k(\nu)\}^{-1}$.  Of course, the more common scenario in applications is that $\nu$ is {\em unknown} and, somehow, the data must be used to account for that additional uncertainty.  A first idea that comes to mind is to estimate $\nu$ by some function $\hat\nu$ of the data, and then plug that into the formula for the estimator of $\mu$, i.e., $\hat\mu(\hat\nu)$.  The most well-known strategy is that proposed by \citet{dersimonian1986}, but there are others, e.g., \citet{paule1982}, \cite{cochran1954}.  Unfortunately, the number of studies, $K$, to be combined is often relatively small \citep{davey2011}, say, $K \leq 7$, and obtaining reliable estimates of the between-study variance based on so few samples is a challenge.  Besides, this plug-in style procedure does not naturally account for uncertainty in $\nu$, so any inference drawn can only be (provably) valid in an asymptotic ($K \to \infty$) sense, which may not be especially meaningful in applications where $K \leq 7$.  More tangibly, confidence intervals based on the DerSimonian--Laird plug-in style method have shown to perform poorly even when $K$ is as large as 20 \citep{liu2018}; see, also, \citet{viechtbauer2005}, \citet{dersimonian2007}, \citet{sidik2007}, \citet{jackson2010}, \citet{chung2013}, and \citet{veroniki2016}.  Therefore, there is a desire for alternative methods that marginalize out the nuisance parameter, $\nu$, and achieve frequentist performance even when the number of studies is small.   

To alleviate this problem of estimating $\nu$, likelihood methods have been proposed, e.g., the log likelihood ratio in \citet{goodman1989meta}, the profile likelihood in \citet{hardy1996}, and the signed profile log likelihood in \citet{severini2000likelihood}, among others.  Unlike previous plug-in methods, these likelihood based procedures introduced an appropriate widening in confidence intervals to deal with the imprecision in estimating $\nu$.  In fact, \citet{guolo2012higher} was able to improve upon these first-order inference results by introducing a Skovgaard correction to the signed profile log likelihood, making it asymptotically standard normal with an error of $O(n^{-3/2})$.  Resorting to higher-order asymptotics with the Skovgaard correction, however, still failed to modify the likelihood in the manner necessary to achieve nominal coverage when the number of studies available is small. 
 
So-called {\em exact methods}, featuring such frequentist guarantees, have been offered as alternatives to the plug-in style and likelihood methods described above.  These include the methods in \citet{follmann1999}, \citet{liu2018}, and \citet{wang2018} which, in one way or another, are based on permutation distributions.  While these permutation methods can produce confidence intervals that achieve nominal frequentist coverage, the discreteness of the permutation distribution makes the results overly conservative, unless $K$ is relatively large.  To our knowledge, the most recent entry into the literature on exact methods for meta-analysis is \citet{michael2018}, and since this shares a number of similarities to our proposal, at least in terms of its construction, we describe this in some detail in Section~\ref{SS:gim.construct}.  

Given that the goal is to develop a method for meta-analysis that controls Type I error, even when $K$ is small, it makes sense to consider the one general, normative framework we are aware of that offers such guarantees.  Specifically, \citet{martin2013} present a construction of what they call an {\em inferential model} that leads to provably valid inference, no asymptotic justification required; see \citet{imbook} for a monograph-length introduction, and \cite{martin2019false} for a survey of some recent developments.  The distinguishing feature of this approach is the user-specified random set that leads to a sort of  ``posterior'' (but {\em not} a probability measure) on the parameter space.  The distribution of this random set can then be used to visualize the information data provides about the parameter of interest as well as construct inference procedures.  We will briefly review the basic construction and its properties in Section~\ref{SS:bim}.  It may happen that the basic inferential model construction is difficult to carry out in an application, so \citet{martin2015, martin2018} developed a simpler and more direct {\em generalized} version.  Since this is the strategy we follow here for meta-analysis, we provide a brief review in Section~\ref{SS:gim}.  

This meta-analysis application boils down to a marginal inference problem, i.e., $\nu$ is an unknown nuisance parameter to be marginalized out so that we can make inference about $\mu$.  The generalized inferential model framework provides at least two strategies for marginalization, and we will show that the method proposed in \citet{michael2018} is itself a generalized inferential model based on one particular choice of marginalization strategy.  Our proposed method, on the other hand, is based on a different and arguably more natural choice of marginalization strategy, leading to a method that performs better than theirs in a variety of respects.  In Section~\ref{S:immeta}, we describe the construction of a generalized inferential model for meta-analysis, show that the solution in \citet{michael2018} is a special case, and present our proposed method.  Details about the computation and theoretical justification are also provided.  Numerical comparisons in simulated data experiments demonstrate that our proposed method outperforms existing methods in terms of both validity and efficiency across a broad range of scenarios.  Two real applications are presented in Section~\ref{S:example} and some concluding remarks are given in Section~\ref{S:discuss}.

\section{Background}
\label{S:im}

\subsection{Basic inferential models}
\label{SS:bim}

Fisher and later Dempster aimed to develop a framework of probabilistic inference without prior distributions, i.e., a prior-free alternative to Bayesian inference.  These approaches, however, failed to reach the statistical mainstream, largely because the derived procedures have no frequentist guarantees.  To fill that gap, \citet{martin2013} argued that frequentist guarantees could be achieved by supplementing the structural, pivotal, or functional model formulation of \citet{fisher1956}, \citet{dempster2008}, \citet{fraser1968}, \citet{barnard1995pivotal}, \citet{dawid1982functional}, \citet{taraldsen2013fiducial},
and others, with an appropriate user-specified random set.  As a consequence of this use of a random set, the inferential output is described by a (data-dependent) non-additive plausibility function \citep[e.g.,][]{shafer1976} instead of an additive posterior probability distribution.  The added complexity of non-additivity is not for its own sake, however, it is actually necessary \citep{balch2019} for the strong validity property in \eqref{eq:valid} that leads to frequentist error rate control.  

In general, we start with a sampling model, $Y \sim \prob_{Y|\theta}$, for the observable data $Y$, depending on some unknown parameter $\theta \in \Theta$.  As one would if the goal were to simulate, we express the model as 
\begin{equation}
\label{eq:baseline.assoc}
Y = a(\theta, U), \quad U \sim \prob_U. 
\end{equation}
where $a$ is a known function, and $\prob_U$ is a known distribution on the space $\UU$.  
We call the relationship in \eqref{eq:baseline.assoc} an {\em association} between observable data $Y$, unknown parameter $\theta$, and auxiliary variable $U$.  This association step is the starting point in the construction of a valid inferential model. 

\begin{astep}
Specify an association of the form \eqref{eq:baseline.assoc} and then define the set 
\[ \Theta_y(u) = \{\theta: y = a(\theta,u)\}, \quad u \in \UU. \]
\end{astep}

\begin{pstep}
Specify a predictive random set $\S$, taking values in the power set of $\UU$, whose contour function, $f(u) = \prob_\S(\S \ni u)$, is such that $f(U) \sim \unif(0,1)$, when $U \sim \prob_U$.
\end{pstep}

\begin{cstep}
Combine the ingredients in the A- and P-steps to get a new random set 
\[ \Theta_y(\S) = \bigcup_{u \in \S} \Theta_y(u), \]
and, for inference about $\theta$, return the distribution of this random set summarized by its plausibility function
\[ \pl_y(A) = \prob_\S\{\Theta_y(\S) \cap A \neq \varnothing\}, \quad A \subseteq \Theta. \]
The distribution of this random set is interpreted as a measure of how {\em plausible} the hypothesis ``$\theta \in A$'' is, based on data $y$ and the posited model.
\end{cstep}

The most unique feature of this construction is the random set $\S$.  In our meta-analysis problem, specification of the random set is straightforward as outlined in Section~\ref{SS:gim}, but the general details can also be found in \citet{imbook}.  What matters is that the properties required of the random set make the inferential model {\em valid}, i.e., 
\begin{equation}
\label{eq:valid}
\sup_{\theta \in A} \prob_{Y|\theta}\{ \pl_Y(A) \leq \alpha\} \leq \alpha \quad \text{for all $\alpha \in (0,1)$ and all $A \subseteq \Theta$}. 
\end{equation}
An important consequence of this validity property is the control it provides on the performance of statistical procedures derived from the inferential model output.  Indeed, a test that rejects a hypothesis ``$\theta \in A$'' if $\pl_y(A) \leq \alpha$ will obviously control the frequentist Type~I error rate at level $\alpha$.  Similarly, a $100(1-\alpha)\%$ plausibility region for $\theta$, given by 
\[ \{\theta: \pl_y(\theta) > \alpha\}, \quad \text{where $\pl_y(\theta) := \pl_y(\{\theta\})$}, \]
has frequentist coverage probability of (at least) $1-\alpha$.  These properties are exact in the sense that they do not require any asymptotic approximations.  The pointwise plausibility function, $\theta \mapsto \pl_y(\theta)$, is also a useful visualization tool, not unlike a Bayesian posterior density function; see Figure~\ref{fig:imexample} below.

\subsection{Generalized inferential models}
\label{SS:gim}

Since the inferential model construction, and corresponding validity result, is general, efficiency often becomes a concern as the dimension of the auxiliary domain grows with the dimension of the data.  To avoid the possible need to specify such a complex, high-dimensional random set, some non-trivial manipulations are required as described \citep[e.g.,][]{martin2015marginal, martin2015conditional} that can be difficult to carry out in a given problem.  
This motivated \citet{martin2015, martin2018} to develop a construction based on a more general formulation and establish conditions under which the corresponding inferential model is valid.  An advantage of this generalized approach is that there is no need for the aforementioned manipulations, hence it is easier to apply.  

A generalized inferential model begins with defining a real-valued function $(y,\theta) \mapsto T_y(\theta)$.  When $Y \sim \prob_{Y|\theta}$, the random variable $T_Y(\theta)$ has a distribution, which we represent with $G_\theta$.  The generalized association then extends the notion in Section~\ref{SS:bim} by connecting the data, parameter, and auxiliary variable via the expression
\[ T_Y(\theta) = G_\theta^{-1}(U), \quad U \sim \unif(0,1). \]
Here and throughout this paper, we will assume (without loss of generality) that $T_y(\theta)$ is large when data $y$ and parameter value $\theta$ disagree.  The first step to our generalized inferential model (A-step) then yields 
\[ \Theta_y(u) = \bigl\{\theta: G_\theta\bigl( T_y(\theta) \bigr) = u \bigr\}, \quad u \in (0,1). \]
The P-step, as before, requires the introduction of some random set in the $u$-space; but the structure that has been imposed here virtually determines it. We thus take 
\[ \S = (0, \tilde U), \quad \tilde U \sim \unif(0,1). \]
The C-step returns a new random set 
\[ \Theta_y(\S) = \bigcup_{u \in \S} \Theta_y(u) = \bigl\{ \theta: G_\theta\bigl( T_y(\theta) \bigr) \leq \tilde U \bigr\}, \quad \tilde U \sim \unif(0,1). \]
Note that this set contains those parameter values that agree with $y$ to some degree, and that this degree is calibrated so that validity holds.  That is, if 
\[ \pl_y(\theta) = \prob_\S\{\Theta_y(\S) \ni \theta\} = \prob_{\tilde U}\{G_\theta(T_Y(\theta)) \leq \tilde U\} = 1 - G_\theta(T_y(\theta)), \]
then we immediately see that $\pl_Y(\theta) \sim \unif(0,1)$ under $Y \sim \prob_{Y|\theta}$ and, hence, the validity property as stated in \eqref{eq:valid} holds.  

The approach described above returns a plausibility function defined on the full parameter space, $\Theta$.  From this, one can carry out marginal inference on any feature, $\psi = \psi(\theta)$, of $\theta$ via optimization.  In particular, following \citet{shafer1987} Sec.~G, the corresponding marginal point plausibility function for $\psi$ is 
\begin{equation}
\label{eq:mpl0}
\mpl_y(\psi) = \sup_{\theta: \psi(\theta) = \psi} \pl_y(\theta), 
\end{equation}
and the validity properties associated with $\pl$ carry over immediately to $\mpl$.  

But there are cases, like the one we consider in Section~\ref{S:immeta}, where interest is exclusively in a specific feature of $\theta$, and it is beneficial to construct a marginal inferential model directly.  Express the full parameter as $\theta = (\psi, \eta)$, where $\psi$ and $\eta$ are the interest and nuisance parameters, respectively.  Next, define a function $(y,\psi) \mapsto T_y(\psi)$ that only directly involves the interest parameter, again, with the property that large values of the function correspond to cases where data and the interest parameter disagree.  An example of such a function is the negative log relative profile likelihood, as in \eqref{eq:relprof}.  If the distribution of $T_Y(\psi)$, as a function of $Y \sim \prob_{Y|\psi,\eta}$, does not depend on $\eta$, then construction of the generalized inferential model for $\psi$ proceeds exactly as above, with any fixed value of $\eta$.  In most applications, however, including our meta-analysis problem, the distribution of $T_Y(\psi)$ does depend on the nuisance parameter, so some non-trivial adjustments are required.  We discuss this in detail in Section~\ref{S:immeta}.

\section{Inferential models for meta-analysis}
\label{S:immeta}  

\subsection{Construction}
\label{SS:gim.construct}

For our meta-analysis case, write $\prob_{Y|\mu,\nu}$ for the joint distribution of $Y = (Y_1,\ldots,Y_K)$, where $Y_k$ are independently generated from a $\nm(\mu, \sigma_k^2 + \nu)$. It is straightforward to write down an association that links the data $Y$, the unknown parameter $\theta=(\mu,\nu)$, and a set of auxiliary variables, e.g., 
\[ Y_k = \mu + (\sigma_k^2 + \nu)^{1/2} U_k, \quad k=1,\ldots,K, \]
where $U_k$'s are iid $\nm(0,1)$. Following \citet{martin2015conditional}, the next step would be to reduce the dimension of $(U_1,\ldots,U_K)$ to match that of $\theta$.  This step turns out to be challenging but, fortunately, a generalized inferential model is within reach.  

The full parameter is $\theta=(\mu,\nu)$ but, since only $\mu$ is of interest, marginalization is desired.  As discussed in Section~\ref{SS:gim}, there are at least two ways to proceed.  The first is to start with a summary $T_Y(\theta) = T_Y(\mu,\nu)$ of the data and full parameter, which takes large values when data and the candidate parameters disagree, and then marginalize to the $\mu$-space {\em after} constructing the plausibility function on the $(\mu,\nu)$-space.  That is, we define a plausibility function on the full parameter space as
\[ \pl_y(\mu,\nu) = 1 - G_{\mu,\nu}\bigl( T_y(\mu,\nu) \bigr), \]
where $G_{\mu,\nu}$ is the distribution of $T_Y(\mu,\nu)$ under $Y \sim \prob_{Y|\mu,\nu}$.  Like in \eqref{eq:mpl0}, we obtain our desired marginal plausibility by optimization:
\[ \mpl_y(\mu) = \sup_\nu \pl_y(\mu, \nu). \]
The corresponding $100(1-\alpha)$\% plausibility interval for $\mu$ is $\{\mu: \mpl_y(\mu) > \alpha\}$, corresponding to a projection of the joint plausibility region for the full parameter onto the $\mu$-space.  After some reflection on the solution in \citet{michael2018}, one sees that it is precisely a generalized inferential model as just described, with $T_y(\mu,\nu)$ given by 
\begin{equation}
T_y(\mu, \nu) = T_w(\mu) + c_0 \Big[ \log \frac{L_y(\mu, \hat{\nu}_{DL})}{L_y(\mu, \nu)}
\Big], 
\label{eq:exacteq}
\end{equation}
a linear combination of DerSimonian and Laird's Wald-type summary statistic $T_w(\mu)$ and a log likelihood ratio, with a constant $c_0$ controlling its contribution.  
Those authors do not describe their proposal as a (generalized) inferential model, but we believe that this perspective is beneficial both for developing some intuition about their solution and for comparing with our proposed solution.  

Despite the ease of marginalizing out the nuisance parameter from the joint plausibility function, we adopt the second strategy for marginalization discussed in Section~\ref{SS:gim} that eliminates the nuisance parameter {\em before} constructing the plausibility function.  That is, we start with a function, $T_y(\mu)$, that does not directly involve $\nu$.  Like in \citet{goodman1989meta}, we recommend the use of a negative log relative profile in which
\begin{equation}
T_y(\mu) = -\log \frac{\sup_\nu L_y(\mu, \nu)}{\sup_{\mu,\nu} L_y(\mu, \nu)}, \quad \mu \in \RR. 
\label{eq:relprof}
\end{equation}
Here $L_y(\mu, \nu) \propto \prod_{k=1}^K (\sigma_k^2 + \nu)^{-1/2} \exp\{-\frac12 (\sigma_k^2 + \nu)^{-1} (y_k - \mu)^2\}$ is the likelihood function under the assumed model $\prob_{Y|\mu,\nu}$.  Note that the $\nu$ value at which the maximum is attained in the numerator---call it $\hat\nu_\mu$---depends on the specified value of $\mu$.  As before, we define the distribution function of $T_Y(\mu)$ under the model $Y \sim \prob_{Y|\mu,\nu}$ as 
\[ G_\nu(t) = \prob_{Y|0,\nu}\{T_Y(0) \leq t\}, \quad t > 0. \]
Here we have inserted the default zero value for $\mu$ because the location model structure means the distribution of $T_Y(\mu)$ does not depend on the value of $\mu$, when $Y \sim \prob_{Y|\mu,\nu}$.  Following the remainder of the construction outlined in Section~\ref{SS:gim}, we arrive at a marginal point plausibility function 
\[ \text{``$\mpl_y(\mu)$''} = 1 - G_\nu(T_y(\mu)). \]
The quotation marks on the left-hand side are to signal that this is not a function that we can actually work with because the right-hand side depends on the unknown value of the nuisance parameter $\nu$.  To overcome this, we will use a plug-in estimate for $\nu$, where it appears in $G_\nu$.  Before proceeding, it is important to emphasize that our plug-in proposal is fundamentally different than those mentioned in Section~\ref{S:intro}; we discuss this in more detail in Section~\ref{SS:valid}.  For our plug-in estimator, we propose to use that $\nu$ value where the maximum in the numerator of the profile likelihood is attained, namely, $\hat\nu_\mu$, which implicitly depends on data $y$.  Putting it all together, our proposed marginal point plausibility function for $\mu$ is 
\begin{equation}
\label{eq:mpl}
\mpl_y(\mu) = 1-G_{\hat\nu_\mu}(T_y(\mu)), \quad \mu \in \RR, 
\end{equation}
which is now just a function of data and the generic argument $\mu$.  This function can be plotted to visualize what the data suggests about where the true value of $\mu$ is and, more formally, we can read off a $100(1-\alpha)\%$ marginal plausibility interval for $\mu$ as follows:
\[ \{\mu: \mpl_y(\mu) > \alpha\}. \]

Computation of $\mpl_y$ requires an approximation of the analytically intractable distribution $G_{\hat\nu_\mu}$, but this is straightforward to do via Monte Carlo; see Algorithm~\ref{algo:1}.  And once $\mpl_y(\mu)$ is available on a grid of values, extracting the plausibility interval for $\mu$ is easy, but the endpoints could be targeted more directly using, say, the proposed Monte Carlo method coupled with stochastic approximation. 

\begin{algorithm}[t]
 Generate $M$ samples of $K$ study-level errors $e_{1m}^*, \ldots, e_{Km}^* \sim \nm(0,1)$\; 
 Set a fine grid of $\mu$ values\;
 \For{each $\mu$ value on the specified grid} {
 Find $\hat\nu_{\mu}$ for the observed data $y$ and given $\mu$\;
 \For{$m=1,\ldots,M$}{
  Set $Y_{km}^* = (\sigma_k^2 + \hat\nu_{\mu})^{1/2}e_{km}^*$ for $k = 1, \ldots, K$\; 
  Calculate $T_m^* = T_{Y_m^*}(0)$ based on $Y_m^* = (Y_{1m}^*,\ldots,Y_{Km}^*)$\; 
 }
 Approximate $G_{\hat\nu_\mu}(t)$ by $M^{-1}\sum_{m=1}^M 1\{T_m^* \leq t\}$\; 
 }
 \caption{Monte Carlo approximation of $G_{\hat\nu_\mu}$}
 \label{algo:1}
\end{algorithm}

\subsection{Illustration}

To illustrate our proposed method, two examples are shown in Figure~\ref{fig:imexample}.  In each, a meta-analysis is carried out on $K=3$ studies, and each study's variance $\sigma_k^2$ generated from a inverse gamma distribution with a shape and scale parameter of 1.  The data supplied from each of these hypothetical studies were generated from a normal distribution in which the true population mean was set at $\mu = 5$ and the variance at $\nu + \sigma_k^2$.  It is straightforward to construct plausibility functions for the individual studies, with data $(Y_k, \sigma_k^2)$, for $k=1,2,3$, under the normality assumption \citep[see, e.g.,][]{pvalue.course}, and these curves are plotted in gray in Figure~\ref{fig:imexample}.  The black curve corresponds to the marginal plausibility function, $\mu \mapsto \mpl_y(\mu)$, described in the previous section, evaluated using the Monte Carlo method outlined in Algorithm~\ref{algo:1}.  Panels~(a) and (b) correspond to $\nu=1$ and $\nu=2$, respectively.  Note that, as expected, the black curve is a ``combination'' of the three gray curves, with more influence coming from those gray curves that are tighter, corresponding to a more informative individual study.  

\begin{figure}%
    \centering
    \subfloat[$\nu=1$]{\includegraphics[width=7cm]{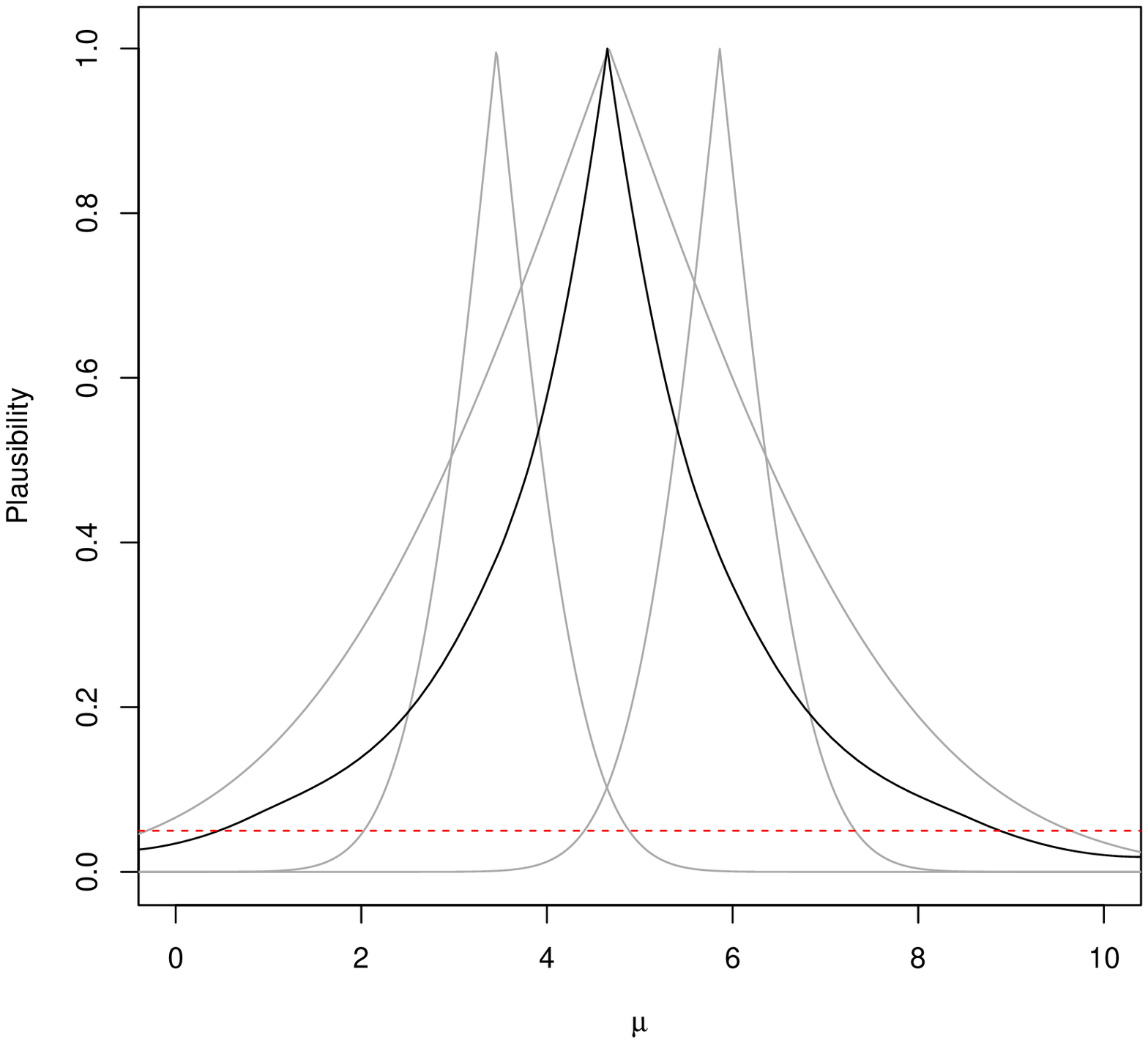}}%
    \qquad
    \subfloat[$\nu=2$]{\includegraphics[width=7cm]{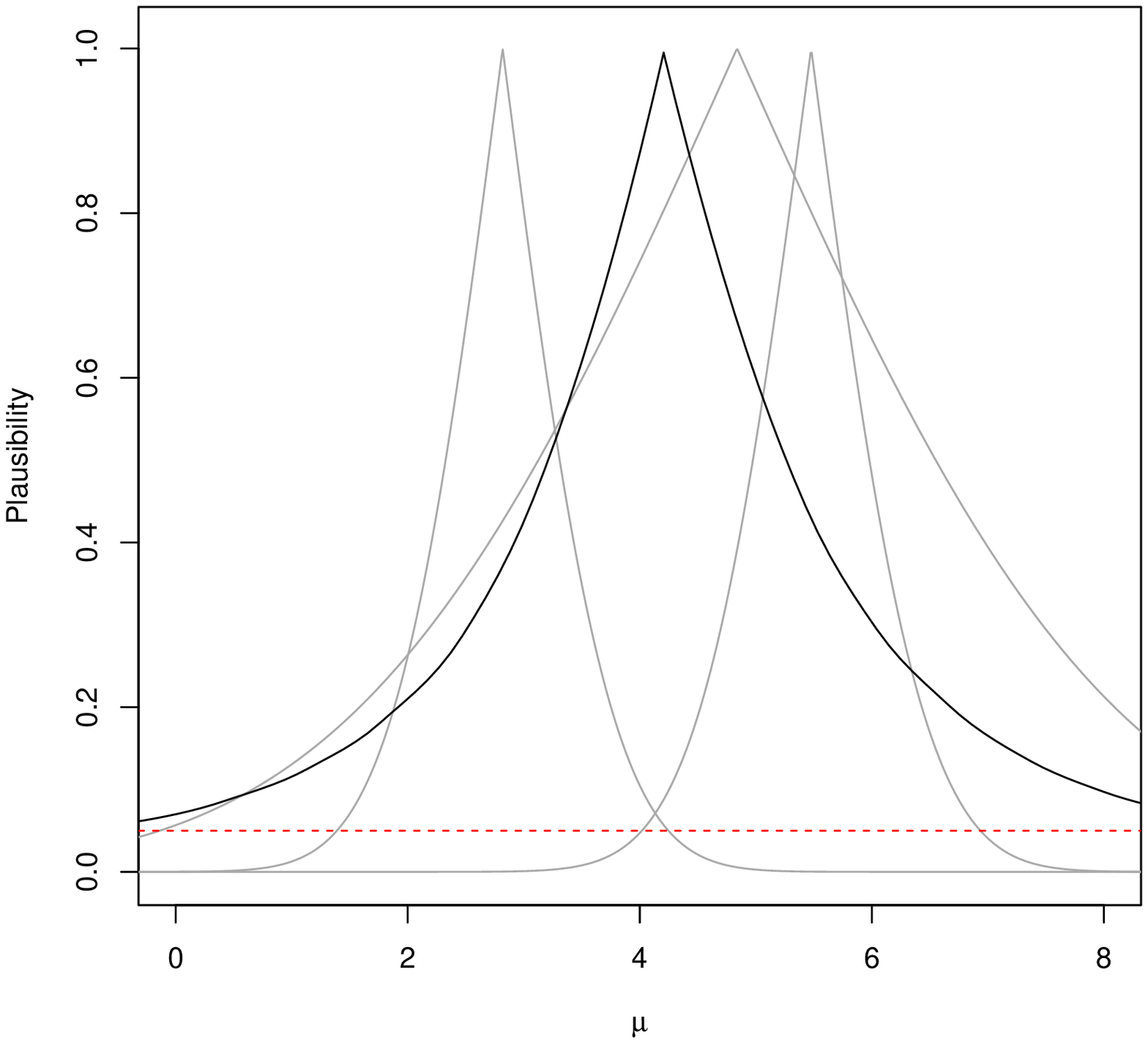} }%
    \caption{Examples of two simulated meta-analyses where the number of studies available $K = 3$. Plausibility functions associated with each individual study (in gray) and the combined plausibility function associated with our proposed inferential methods approach (in black). Marginal 95\% plausibility intervals for $\mu$ can be obtained where the combined plausibility intersects with $\alpha = 0.05$.}%
    \label{fig:imexample}%
\end{figure}

\subsection{Theoretical properties}
\label{SS:valid} 

If the value of the nuisance parameter $\nu$ were known, and used, in our construction of the (marginal) generalized inferential model for $\mu$, then the validity property, as stated in \eqref{eq:valid}, would be immediate.  For the practical case where $\nu$ is unknown, we have recommended the inferential model with marginal point plausibility function \eqref{eq:mpl}, which involves a plug-in estimator.  Our use, however, of this plug-in $\hat\nu_\mu$, complicates verification of the validity property.  At the very least, under mild assumptions, our proposed generalized inferential model would be valid for large $K$, and the following theorem confirms this.  

\begin{thm}
\label{thm:valid}
Let $Y^K = (Y_1,\ldots,Y_K)$ be an independent sample from the random effects model, $\prob_{Y|\mu,\nu}$, described above, where both $\mu$ and $\nu$ are unknown, but each within-study variance $\sigma_k^2$ is known. Then the marginal plausibility function $\mpl_{Y^K}$ in \eqref{eq:mpl} satisfies 
\[ \mpl_{Y^K}(\mu) \to \unif(0,1) \quad \text{in distribution under $\prob_{Y|\mu,\nu}$ as $K \to \infty$}. \]
In particular, the marginal plausibility region $\{\mu: \mpl_{Y^K}(\mu) > \alpha\}$ has coverage probability approximately equal to $1-\alpha$, for large $K$.  
\end{thm}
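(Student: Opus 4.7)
The plan is as follows. Rewriting the marginal plausibility as $\mpl_{Y^K}(\mu) = 1 - G^{(K)}_{\hat\nu_\mu}(T_{Y^K}(\mu))$, with the superscript $(K)$ emphasizing that $G^{(K)}_\nu$ is a sample-size-indexed family of distribution functions, it suffices to show that $G^{(K)}_{\hat\nu_\mu}(T_{Y^K}(\mu))$ converges in distribution to $\unif(0,1)$. The key observation is that if the true nuisance value $\nu$ were substituted for $\hat\nu_\mu$, then by the probability integral transform $G^{(K)}_\nu(T_{Y^K}(\mu)) \sim \unif(0,1)$ exactly for every $K$, because $G^{(K)}_\nu$ is by construction the CDF of $T_{Y^K}(\mu)$ under $\prob_{Y|\mu,\nu}$ and $T_{Y^K}(\mu)$ is continuously distributed. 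So the problem reduces to showing the plug-in error $G^{(K)}_{\hat\nu_\mu}(T_{Y^K}(\mu)) - G^{(K)}_\nu(T_{Y^K}(\mu))$ is $o_{\prob}(1)$.

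Two ingredients drive the argument. First, consistency of the plug-in: when $\mu$ is at its true value, $\hat\nu_\mu$ is the restricted MLE of $\nu$ based on independent Gaussian observations $Y_k \sim \nm(\mu, \sigma_k^2 + \nu)$ with known $\sigma_k^2$, so under standard regularity (in particular, $\nu > 0$ in the interior and mild conditions on the configuration of $\{\sigma_k^2\}$), $\hat\nu_\mu \to \nu$ in $\prob_{Y|\mu,\nu}$-probability. Second, a uniform convergence result for the family $G^{(K)}_{\nu'}$: Wilks' theorem applied to the profile log-likelihood-ratio statistic gives $2T_{Y^K}(\mu) \Rightarrow \chi^2_1$ under $\prob_{Y|\mu,\nu'}$ for each fixed $\nu' > 0$, so $G^{(K)}_{\nu'}(t) \to F(t)$ pointwise, where $F$ is the CDF of $\tfrac12 \chi^2_1$. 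Since $F$ is continuous, Polya's theorem upgrades this to uniform convergence in $t$; smoothness of the normal random-effects likelihood in $\nu$ upgrades it further to hold uniformly over $\nu'$ in compact subsets of $(0, \infty)$.

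Given these, the conclusion follows by a Slutsky-style argument. Fix a compact neighborhood $N$ of the true $\nu$; by consistency, $\prob\{\hat\nu_\mu \in N\} \to 1$, and on this event a triangle-inequality bound gives
\[ |G^{(K)}_{\hat\nu_\mu}(T_{Y^K}(\mu)) - G^{(K)}_\nu(T_{Y^K}(\mu))| \leq 2 \sup_{\nu' \in N} \sup_{t > 0} |G^{(K)}_{\nu'}(t) - F(t)|, \]
which tends to $0$ by the uniform convergence established above. Since $G^{(K)}_\nu(T_{Y^K}(\mu))$ is exactly $\unif(0,1)$ for every $K$, the perturbed sequence $G^{(K)}_{\hat\nu_\mu}(T_{Y^K}(\mu))$ has the same weak limit, namely $\unif(0,1)$. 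The coverage claim for $\{\mu : \mpl_{Y^K}(\mu) > \alpha\}$ follows by evaluating at the true $\mu$ and using that $\prob\{U > \alpha\} = 1 - \alpha$ for $U \sim \unif(0,1)$.

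The main technical obstacle is the uniform-in-$\nu'$ convergence. Pointwise Wilks is immediate and uniformity in $t$ is free via Polya, but uniformity in the nuisance parameter $\nu'$ requires either an explicit Wilks rate combined with regularity of the Fisher information in $\nu'$, or a stochastic equicontinuity argument for the profile log-likelihood ratio viewed as a random function of $\nu'$. Either approach is viable here because the normal random-effects likelihood is smooth in $\nu$ on compact subsets of $(0, \infty)$ with derivatives controlled uniformly in $K$ under reasonable conditions on $\{\sigma_k^2\}$, but fleshing this out constitutes the bulk of the technical work.
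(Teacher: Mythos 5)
Your proposal is correct in outline and shares the paper's skeleton almost exactly: the same reduction to showing $G^K_{\hat\nu_\mu}(T_{Y^K}(\mu)) \to \unif(0,1)$ in distribution, the same key observation that $G^K_\nu(T_{Y^K}(\mu))$ is \emph{exactly} uniform under $\prob_{Y^K|\mu,\nu}$ for every $K$, the same Slutsky step, and the same appeal to Wilks's theorem plus the P\'olya-type fact that convergence of distribution functions to a continuous limit upgrades to uniformity in $t$. The one genuine divergence is how the random index $\hat\nu_\mu$ sitting inside $G^K_{(\cdot)}$ is tamed. The paper invokes the rate $\hat\nu_\mu = \nu + O_P(K^{-1/2})$ (citing regularity of the meta-analysis model) to conclude that $\prob_{Y^K|\mu,\nu}$ and $\prob_{Y^K|\mu,\hat\nu_\mu}$ are mutually contiguous, so that Wilks's theorem holds under \emph{both} laws and $G^K_\nu(t)$ and $G^K_{\hat\nu_\mu}(t)$ converge pointwise to the same limit $G^\infty(t)$; no uniformity over a fixed neighborhood of $\nu$ is ever required, only behavior along the shrinking $O(K^{-1/2})$ neighborhood that the rate delivers. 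You instead demand a Wilks theorem holding uniformly over $\nu'$ in a \emph{fixed} compact neighborhood $N$ of $\nu$, combined with mere consistency of $\hat\nu_\mu$. This is a logically sound alternative, but the fixed-neighborhood uniformity is a strictly stronger lemma than what the contiguity route consumes, and it is precisely the piece you leave unproven (your ``bulk of the technical work''). To be fair, the paper is comparably terse at the corresponding point---treating the data-dependent plug-in measure $\prob_{Y^K|\mu,\hat\nu_\mu}$ as a contiguous sequence is itself a nontrivial claim asserted without proof---so neither argument is fully rigorous there; yours simply makes the outstanding obligation explicit, and yours pays for its elementarity (no Le Cam-style contiguity machinery) with a heavier uniform-convergence burden. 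One small point in your favor: you correctly state the limit as $2T_{Y^K}(\mu) \Rightarrow \chisq(1)$, i.e., the limit CDF of $T_{Y^K}(\mu)$ is that of $\tfrac12\chisq(1)$, whereas the paper's display writes ``$-2\log T_{Y^K}(\mu)$,'' a notational slip given that $T$ is already defined as the negative log relative profile likelihood.
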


\begin{proof} 
See Appendix~\ref{S:proof}.
\end{proof}

The above theorem only says that the proposed solution is approximately valid for large $K$, but we do have reason to believe that the support for the proposed solution is actually stronger than this theorem suggests.  Indeed, numerically, even for small $K$, the distribution of $\mpl_Y(\mu)$ is very close to uniform.  As shown in Figure~\ref{fig:simvalid}, based on 10,000 samples of the data pairs $(Y_k, \sigma_k^2)$ from a small number of studies $K = \{3, 4, 5\}$ and a high level of heterogeneity between studies $\nu = 5$, the distribution is close to uniform.  There is some deviation to the left of uniform when the number of studies included for meta-analysis is particularly small, $K=3$, but this is in the middle of the distribution, not in the lower tails (e.g., around 0.05) where we would naturally be interested.  Therefore, the method appears to be not only valid, but nearly exact.  

It is natural to ask: why does our proposed method achieve this apparent higher-order of accuracy?  At least intuitively, this can be answered by noticing that our proposed method has features in common with both the exact and higher-order asymptotically accurate methods described above.  That is, by starting with the relative profile likelihood $T_Y(\mu)$ in \eqref{eq:relprof}, we remove almost all of the dependence on the nuisance parameter; that is, by Wilks's theorem, the profile likelihood ratio has a known distribution---no nuisance parameter dependence---up to first order.  This means that the exact distribution, $G_\nu$, of our $T_Y(\mu)$ is roughly constant in $\nu$.  Therefore, even though there is some remaining dependence on $\nu$, which is why a plug-in estimator is needed, it is not necessary that it be an especially accurate estimate.  Ultimately, our final inferential model is built using the plug-in distribution $G_{\hat\nu_\mu}$, at each individual $\mu$ value, which is very close to the exact distribution.  It is this extra accuracy that leads to the superior practical performance in Figure~\ref{fig:simvalid} and Section~\ref{S:sims}, beyond what would be expected from the large-$K$ approximate validity result in Theorem~\ref{thm:valid}.

\begin{figure}
\centering
\begin{minipage}{.33\textwidth}
  \centering
  \includegraphics[width=1\linewidth]{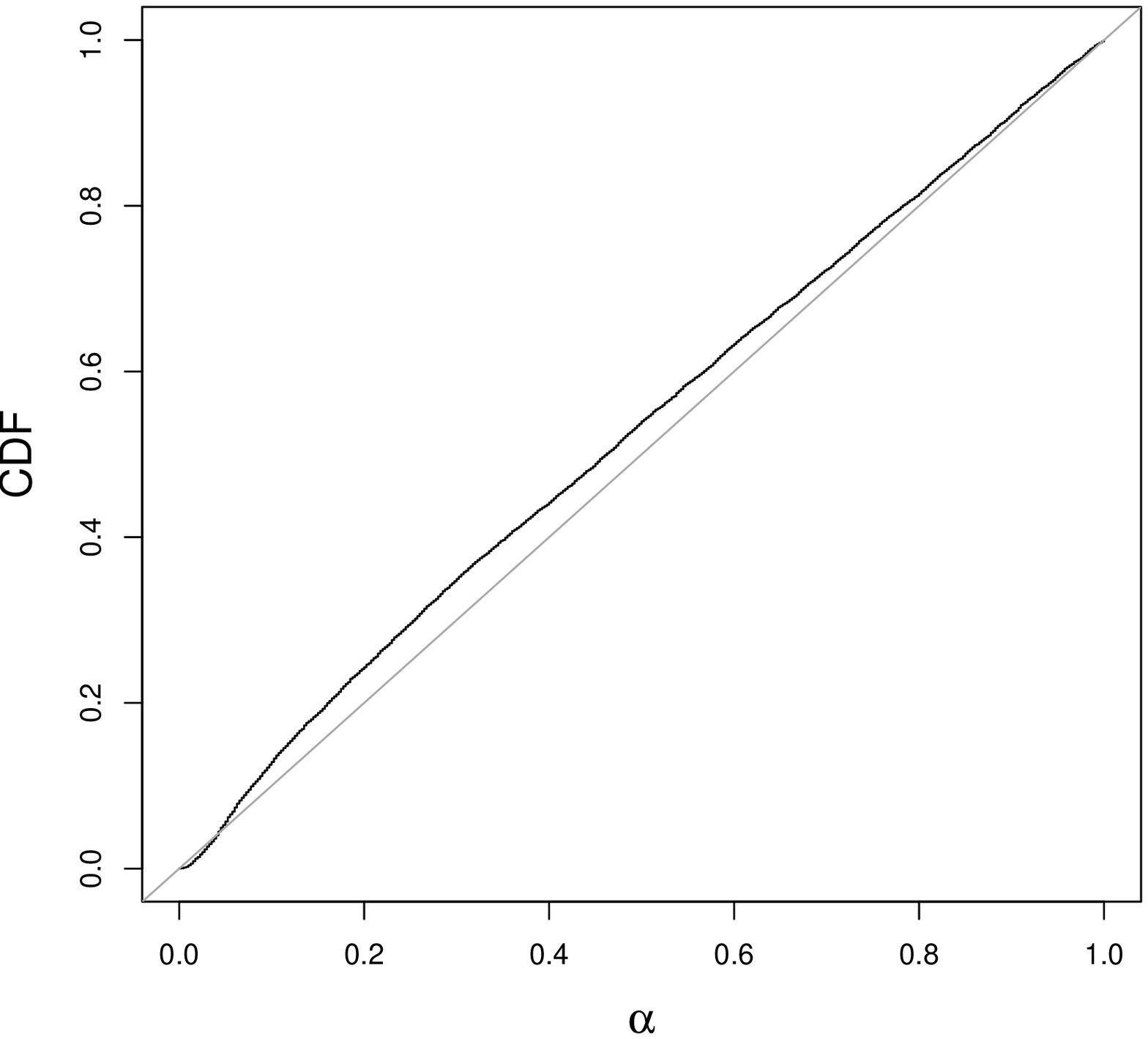}
\end{minipage}%
\begin{minipage}{.33\textwidth}
  \centering
  \includegraphics[width=1\linewidth]{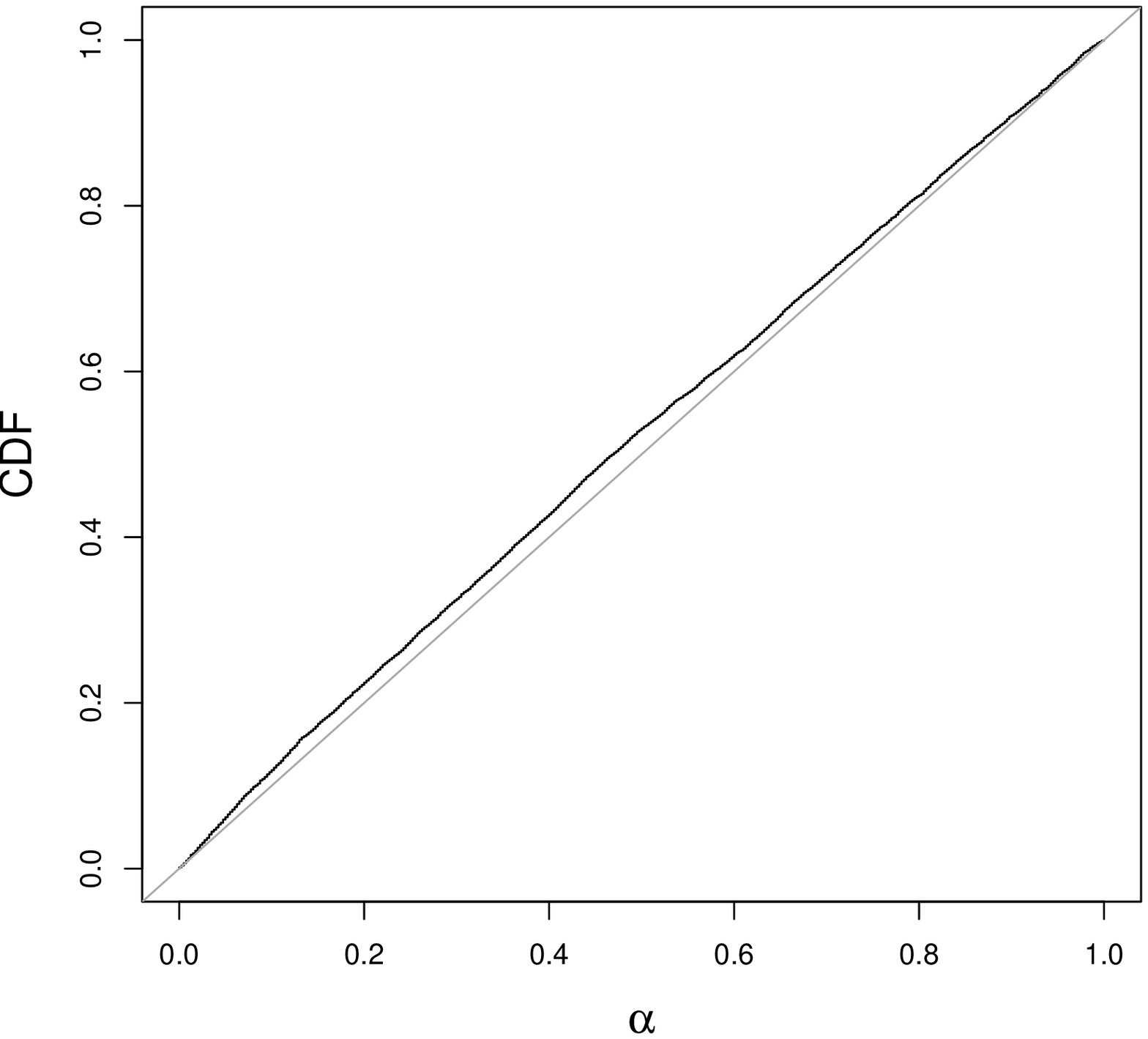}
\end{minipage}
\begin{minipage}{.33\textwidth}
  \centering
  \includegraphics[width=1\linewidth]{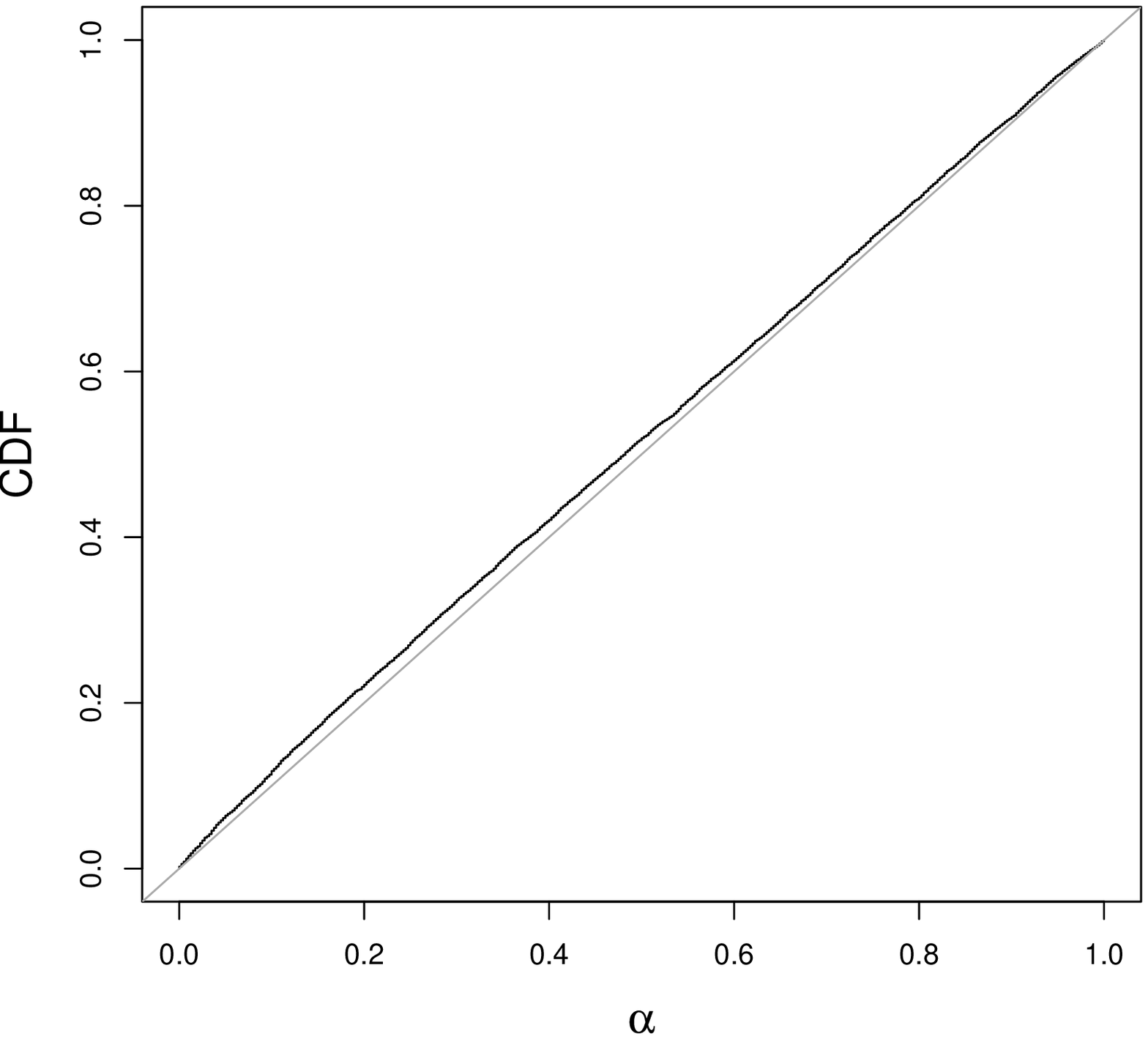}
\end{minipage}
\caption{Distribution of $\alpha \mapsto P_{Y| \mu, \nu} \{ \mpl_Y(\mu) \leq \alpha \}$ (in black) compared with that of a $\unif(0,1)$ (in gray) based on 10000 Monte Carlo samples simulated from a normal-normal random effects model in which $\mu = 5$, $\nu = 5$ and each study's variance is generated from an inverse gamma with a shape and scale parameter of 1. From left to right, the number of studies $K$ included is 3, 4 and 5.}%
\label{fig:simvalid}
\end{figure}


\section{Simulation studies}
\label{S:sims}  

Our simulations examine the performance of our proposed method, compared to that of other existing methods, in the case where $K \leq 7$.  We generate our $K$ study-level observations from the normal--normal random effects model, where the within-study variances---the $\sigma_k^2$'s---are taken as fixed constants; following \citet{gelman2006}, we generate these ``fixed values'' from an inverse gamma distribution with a shape and scale parameter of 1.  Here we fix the overall effect at $\mu=5$ but vary the between-study variances as $\nu \in \{1,3,5\}$, so that we capture various settings from low to high levels of heterogeneity.  For each combination of $K$ and $\nu$, we repeat the experiment 1000 times to get estimates of the coverage probability and mean length of various 95\% confidence intervals.  

We compare the coverage properties of our approach (IM) against that of \citet[DL,][]{dersimonian1986}, the exact method in \citet[EX,][]{michael2018}, the signed profile log likelihood ratio in \citet[LK,][]{severini2000likelihood}, its Skovgaard corrected cousin in \citet[SV,][]{guolo2012higher}, as implemented in the {\tt metaLik} package in R \citep{guolo2012r}, a traditional full Bayesian solution with a non-informative Jeffreys prior, as implemented in the {\tt bayesmeta} package \citep{rover2017}, and, as a benchmark, an {\em oracle} procedure that knows the true value of $\nu$ and uses the classical normal distribution theory for inference on $\mu$.  For the exact method, we set the tuning parameter $c_0$ to the values based on $K$ as recommended in \citet{michael2018}.  


As shown in Figure~\ref{fig:coverage_byk}, our proposed generalized inferential method outperforms all the other methods---except, of course, the oracle---in terms of both coverage and mean interval length.  
The Bayes, exact and higher-order likelihood methods tend to have too high nominal coverage, and the others too low.  The over-coverage seen here is consistent with the results shown in \citet{michael2018}.  Oddly, when the between-study variance parameter is set to a higher value $\nu = 5$, so that the average heterogeneity among the 1000 simulations is high, the Skovgaard corrected signed profile log likelihood actually achieves nominal coverage across all small settings of $K$.  This is in line with the results in \citet{guolo2012higher}, in which the estimator is sensitive to the level of heterogeneity.  

To further highlight this sensitivity of the higher-order likelihood approach, we re-ran our methods above in the same settings used in \citet{guolo2012higher}; more specifically, we re-examined the performance of our IM approach with $K \in \{3, 4, 5, 6, 7\}$, $\mu = 0.5$, and $\nu = \{0.08, 0.10, 0.12\}$.  We also generate the within-study variances for each study $K$ from a uniform distribution on the interval 0.01 and 0.06 as done in \citet{guolo2012higher}.  As shown in Figure~\ref{fig:coverage_new}, our method still outperforms in these settings.  Conversely, the under-coverage of DL and LK across these two different simulation settings in Figure~\ref{fig:coverage_byk} and Figure~\ref{fig:coverage_new} are to be expected, as $K$ is too small for the first-order asymptotic approximations to kick in.  Moreover, our proposed method's strong coverage performance is not the result of having overly wide intervals: our mean lengths fall right in between those of the over-and under-coverage methods, and are quite close to that of the oracle as $K$ becomes larger.  Remarkably, these patterns hold across different heterogeneity levels as well.

\begin{figure}[t]
   \centering
\begin{tabular}{ccc}
\includegraphics[width=.3\linewidth]{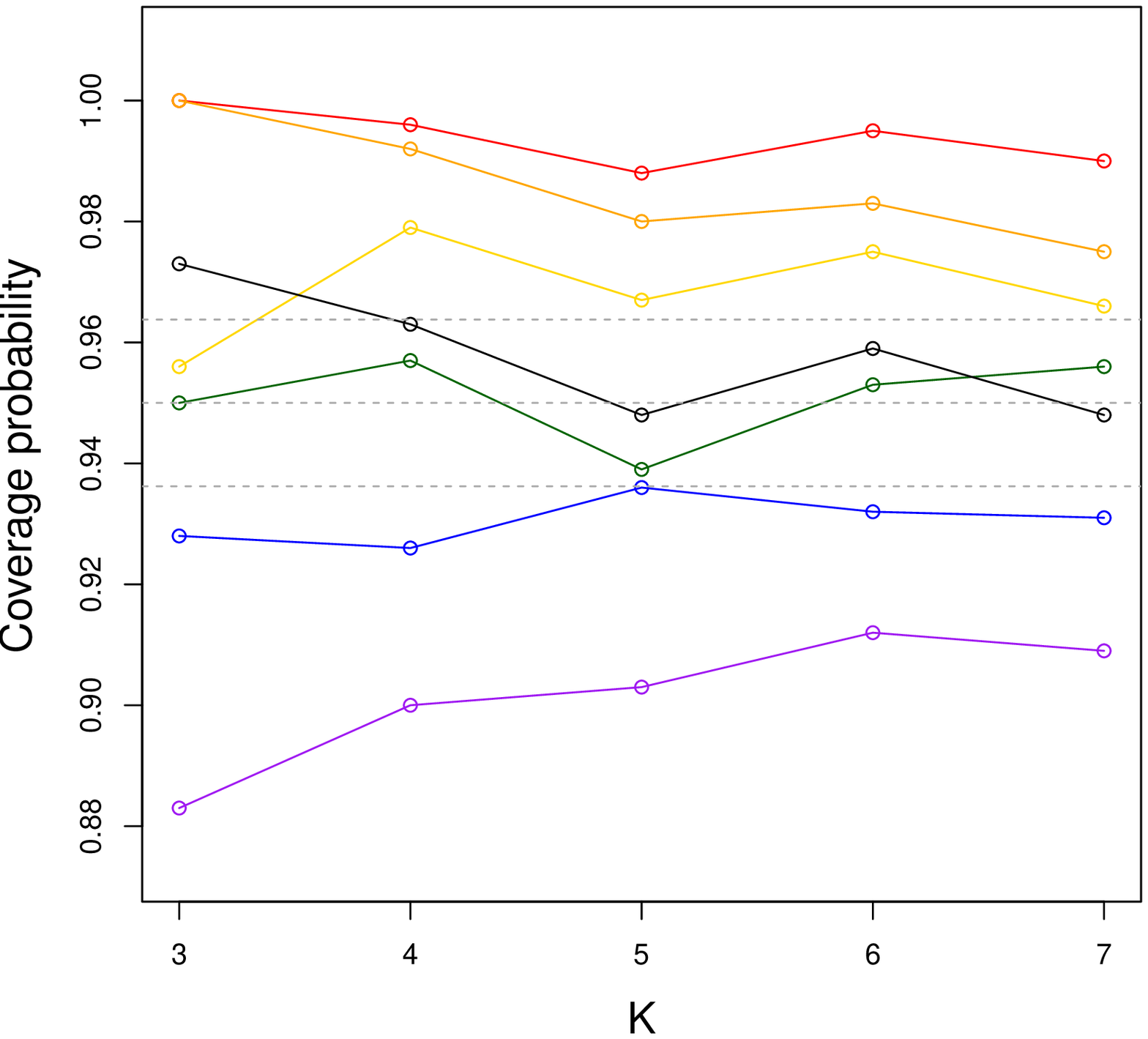}&
\includegraphics[width=.3\linewidth]{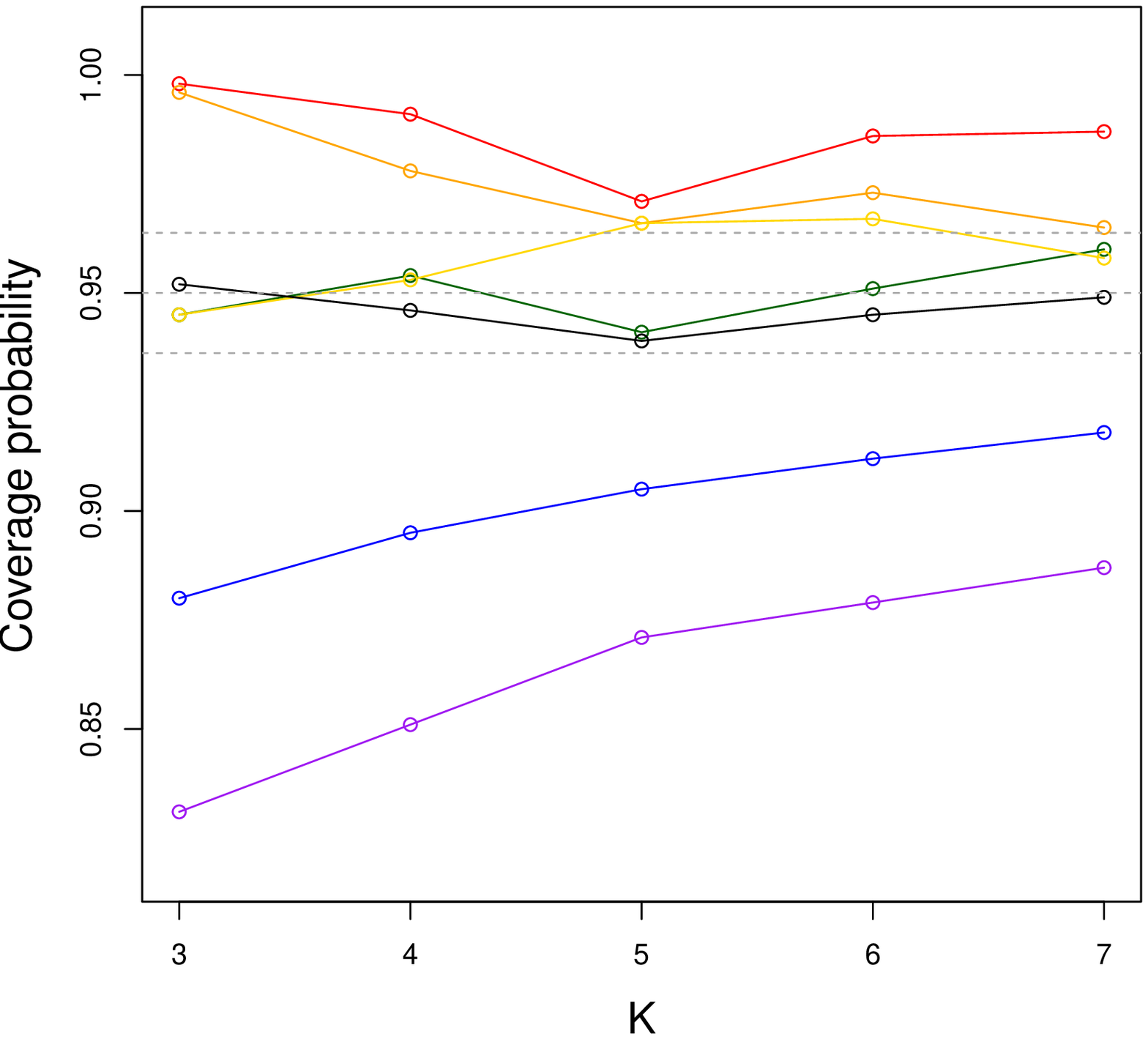}&
\includegraphics[width=.3\linewidth]{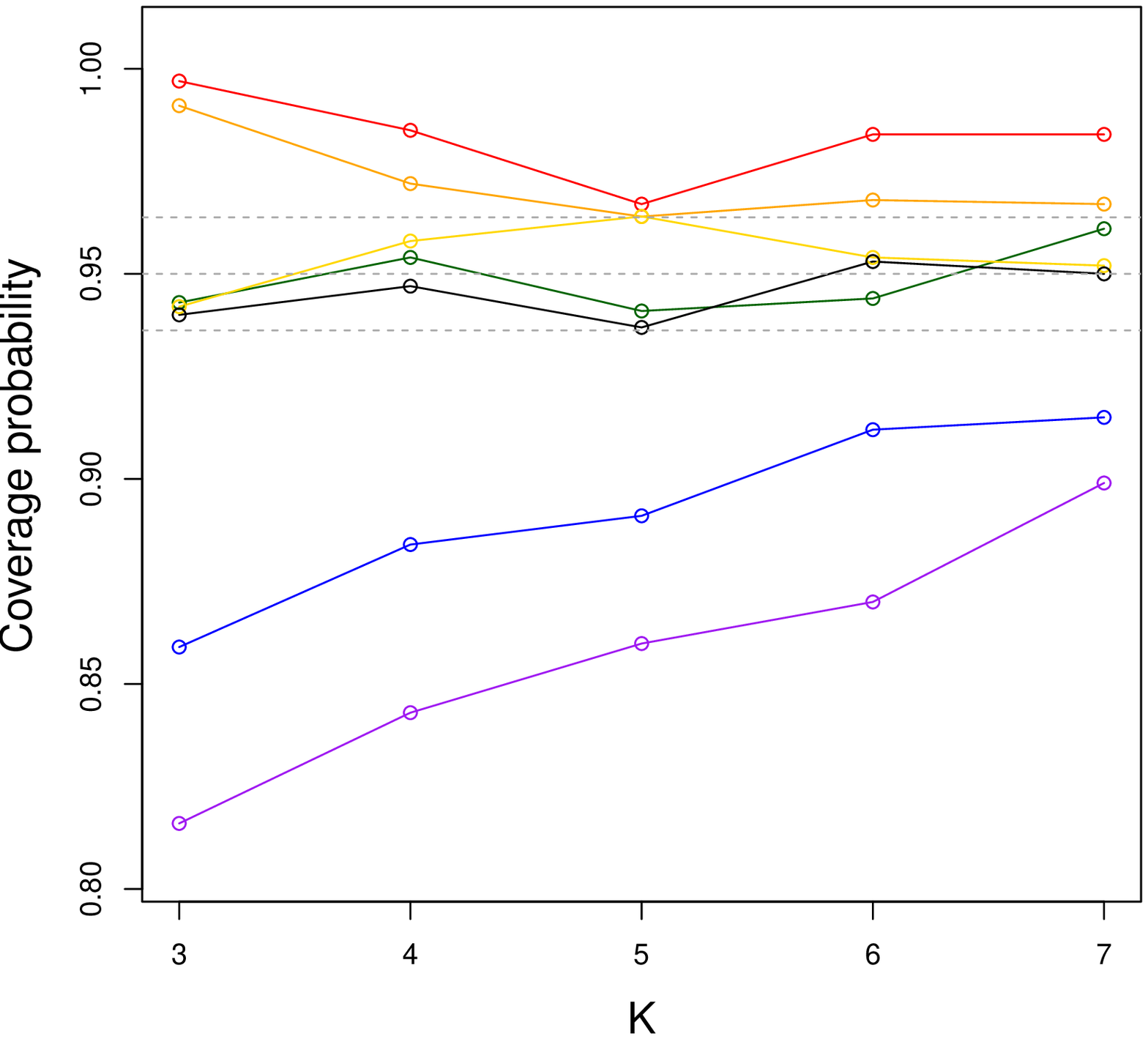}\\
\includegraphics[width=.3\linewidth]{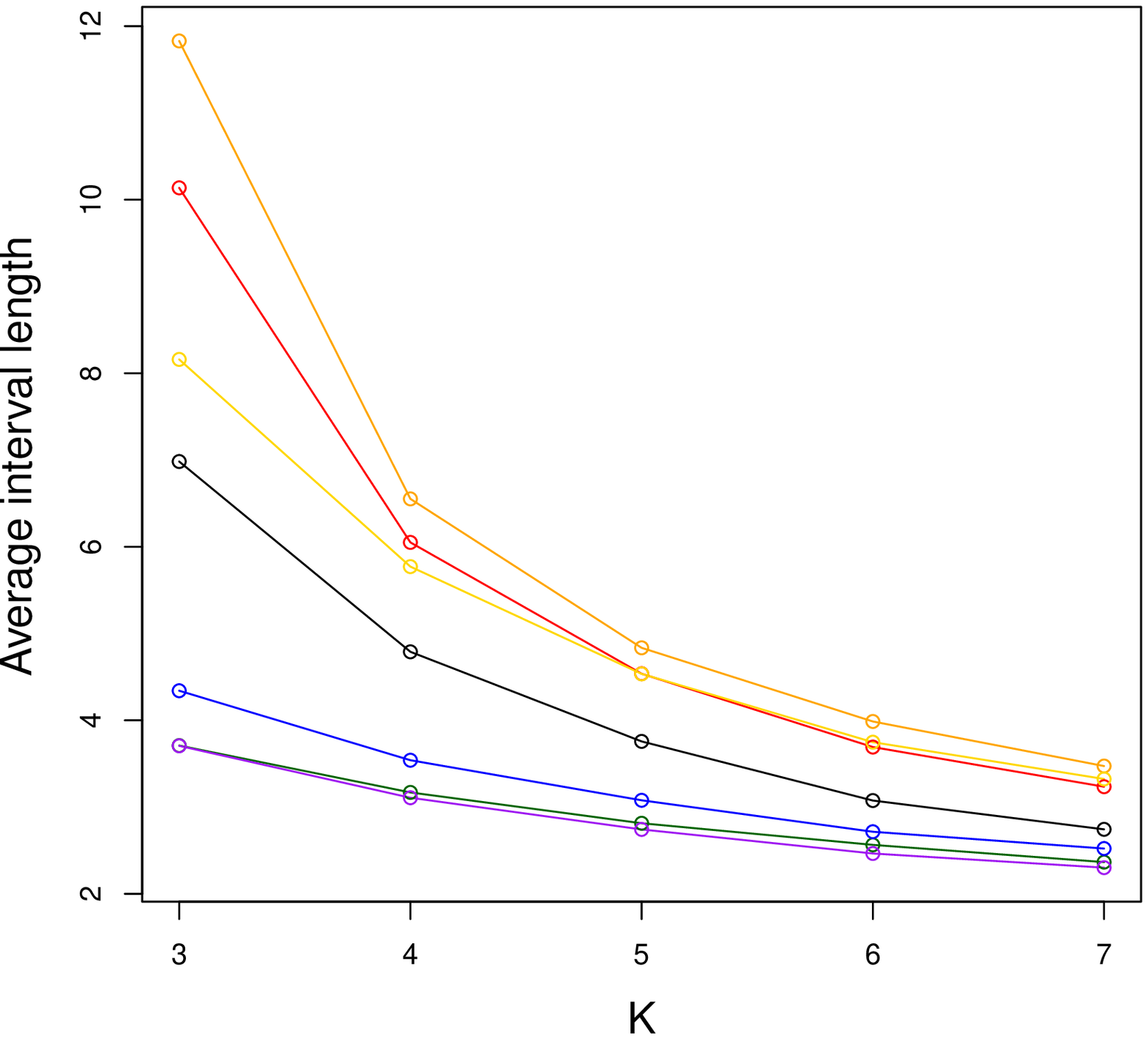}&
\includegraphics[width=.3\linewidth]{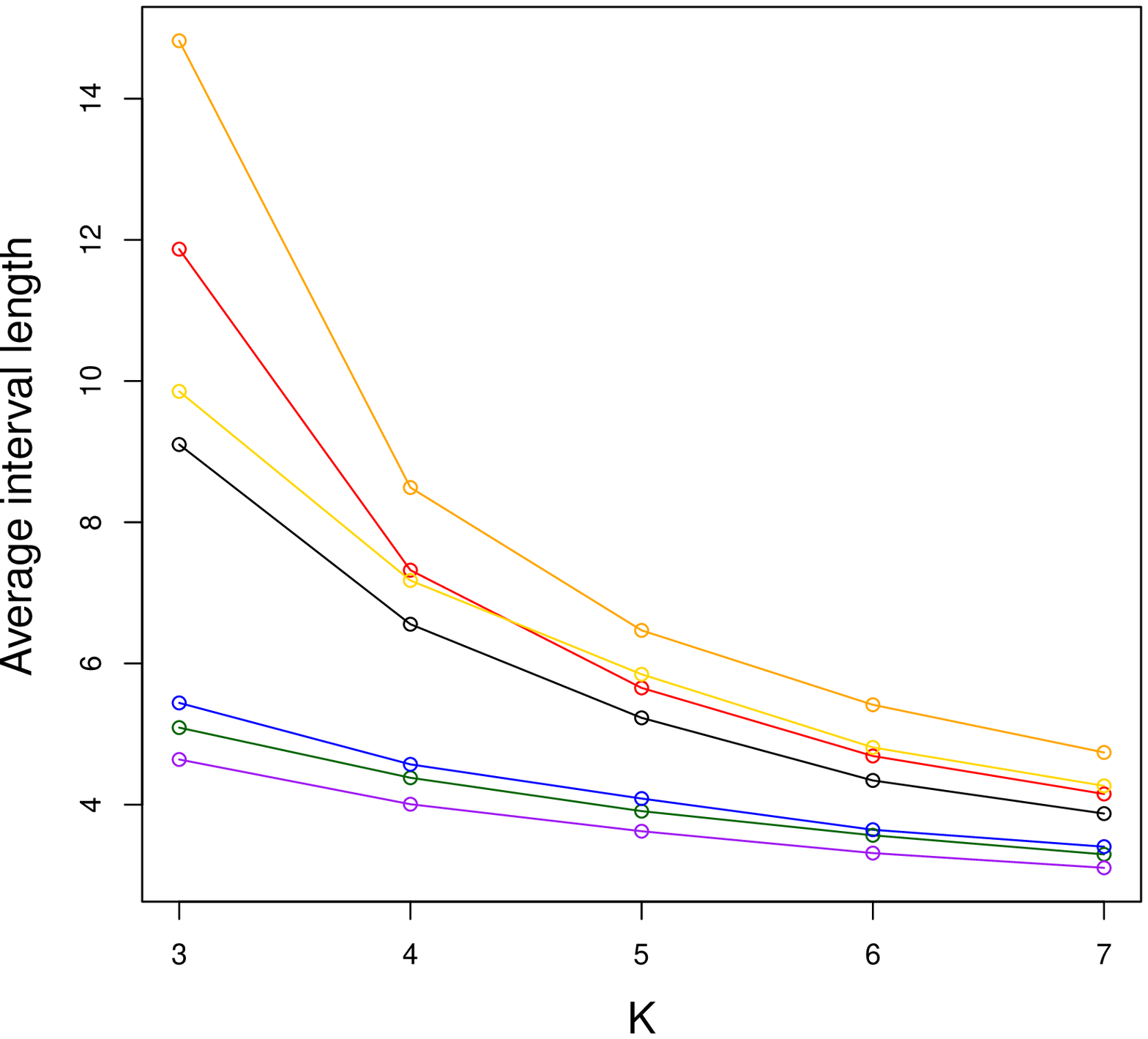}&
\includegraphics[width=.3\linewidth]{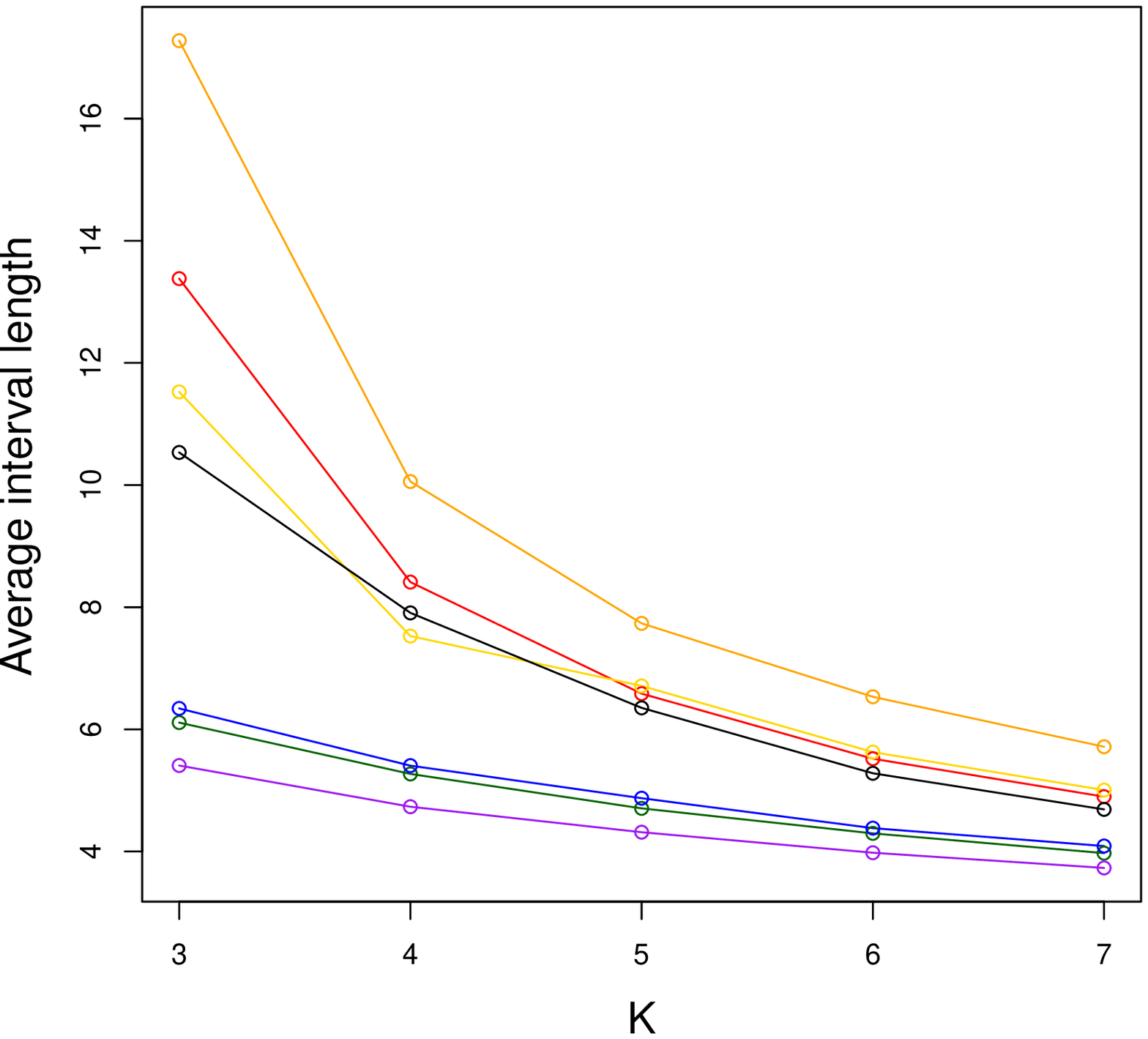}\\
\end{tabular}
\caption{Coverage probabilities and coverage lengths across 15 different simulation settings for the number of studies available for meta-analysis $K$ and the level of heterogeneity $\nu$. Results for DL (in purple), LK (in blue), oracle (in green), SV (in yellow), EX (in orange), traditional Bayes (in red), and our proposed method (in black). From left to right, data are generated from a fixed between-study variance $\nu \in \{1, 3, 5\}$.}
\label{fig:coverage_byk}
\end{figure}

\begin{figure}[t]
   \centering
\begin{tabular}{ccc}
\includegraphics[width=.3\linewidth]{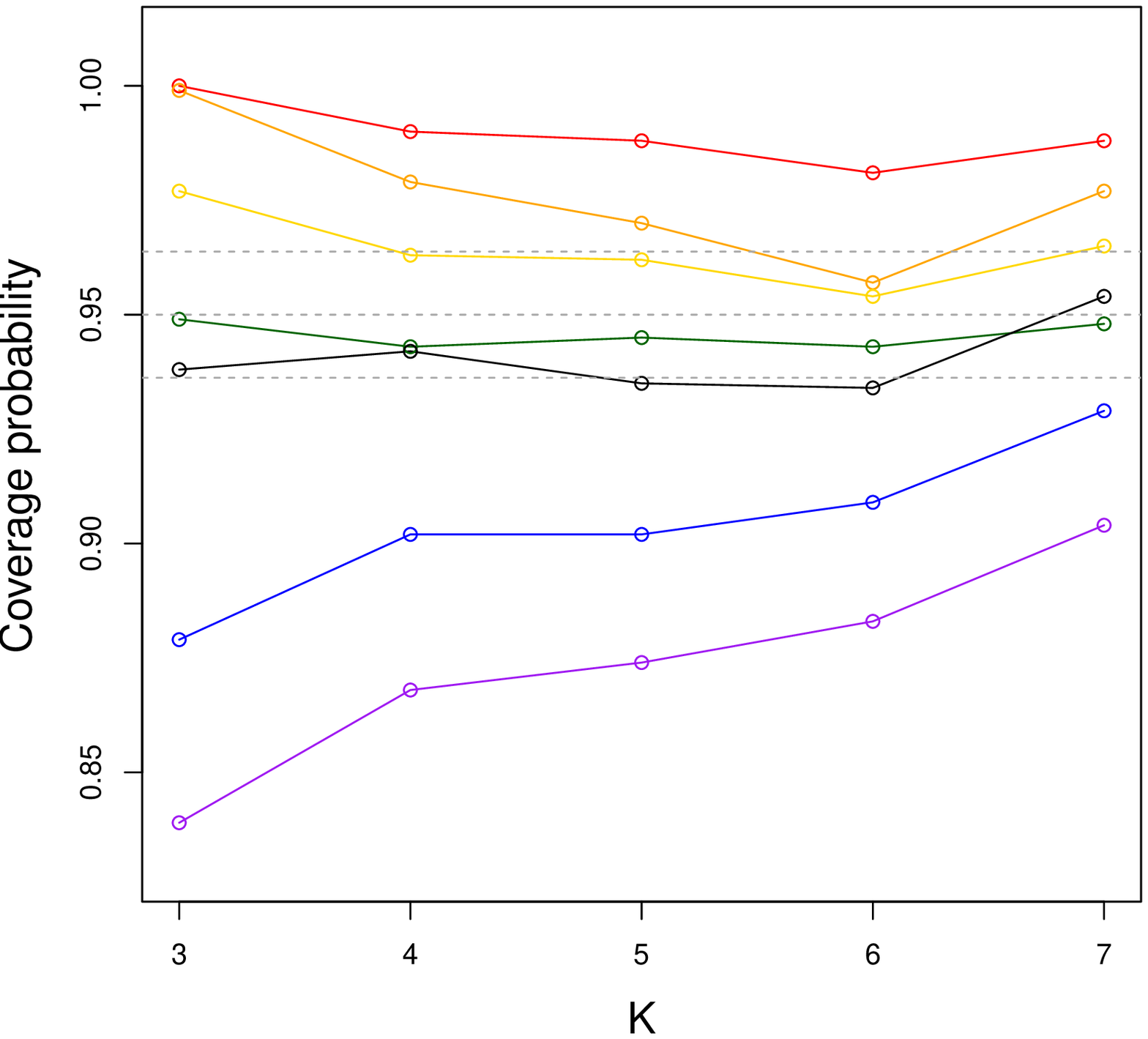}&
\includegraphics[width=.3\linewidth]{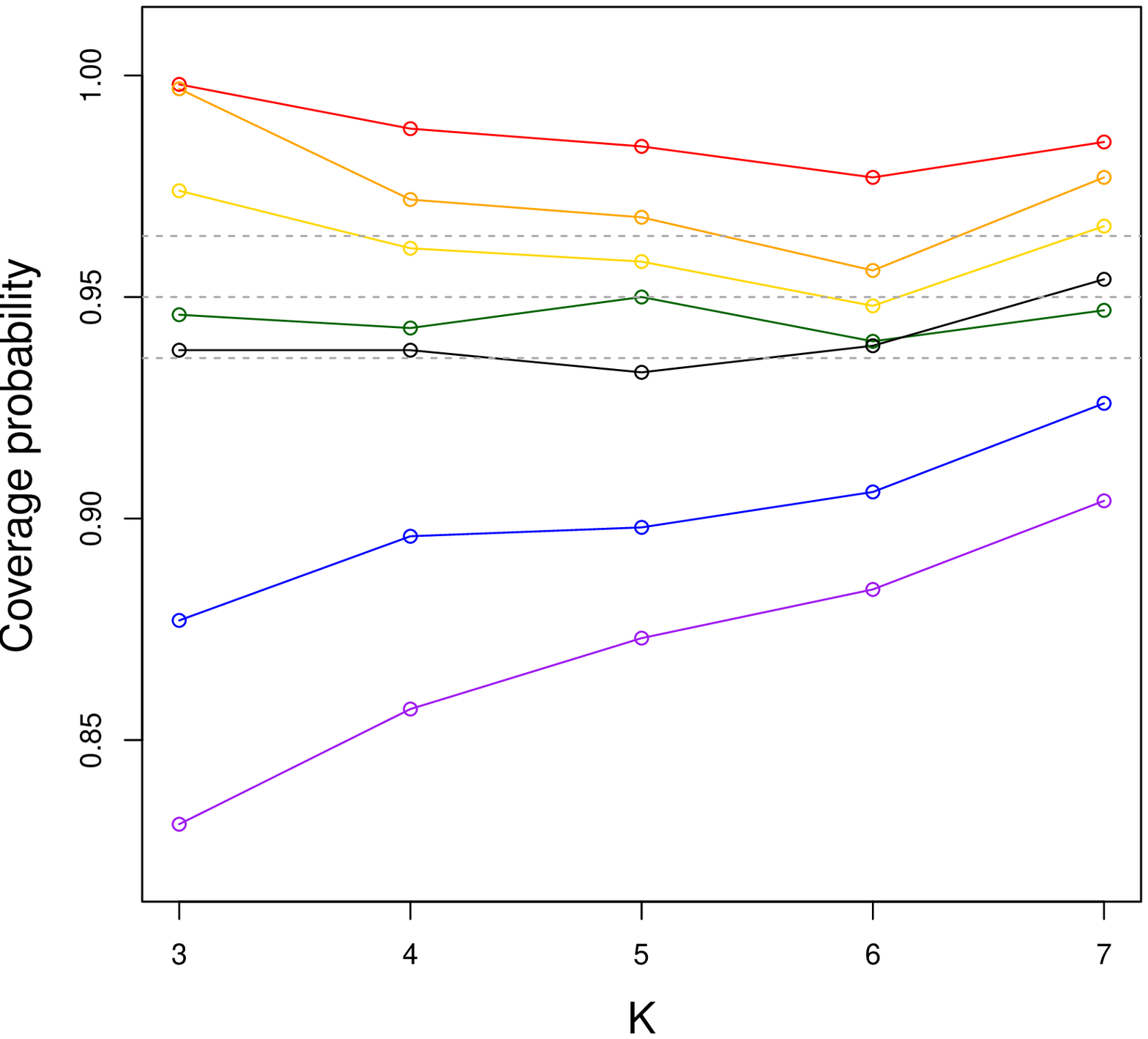}&
\includegraphics[width=.3\linewidth]{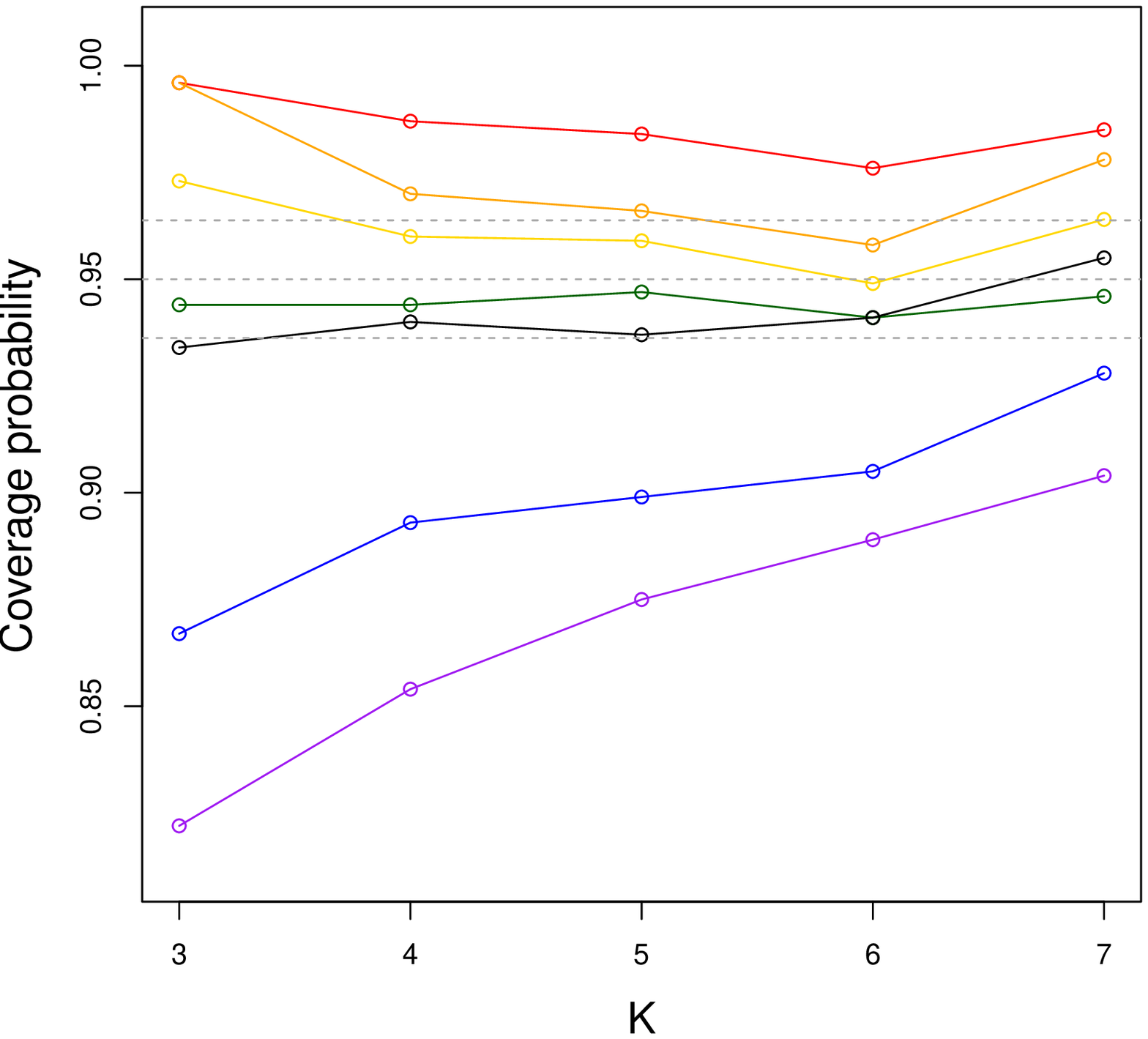}\\
\includegraphics[width=.3\linewidth]{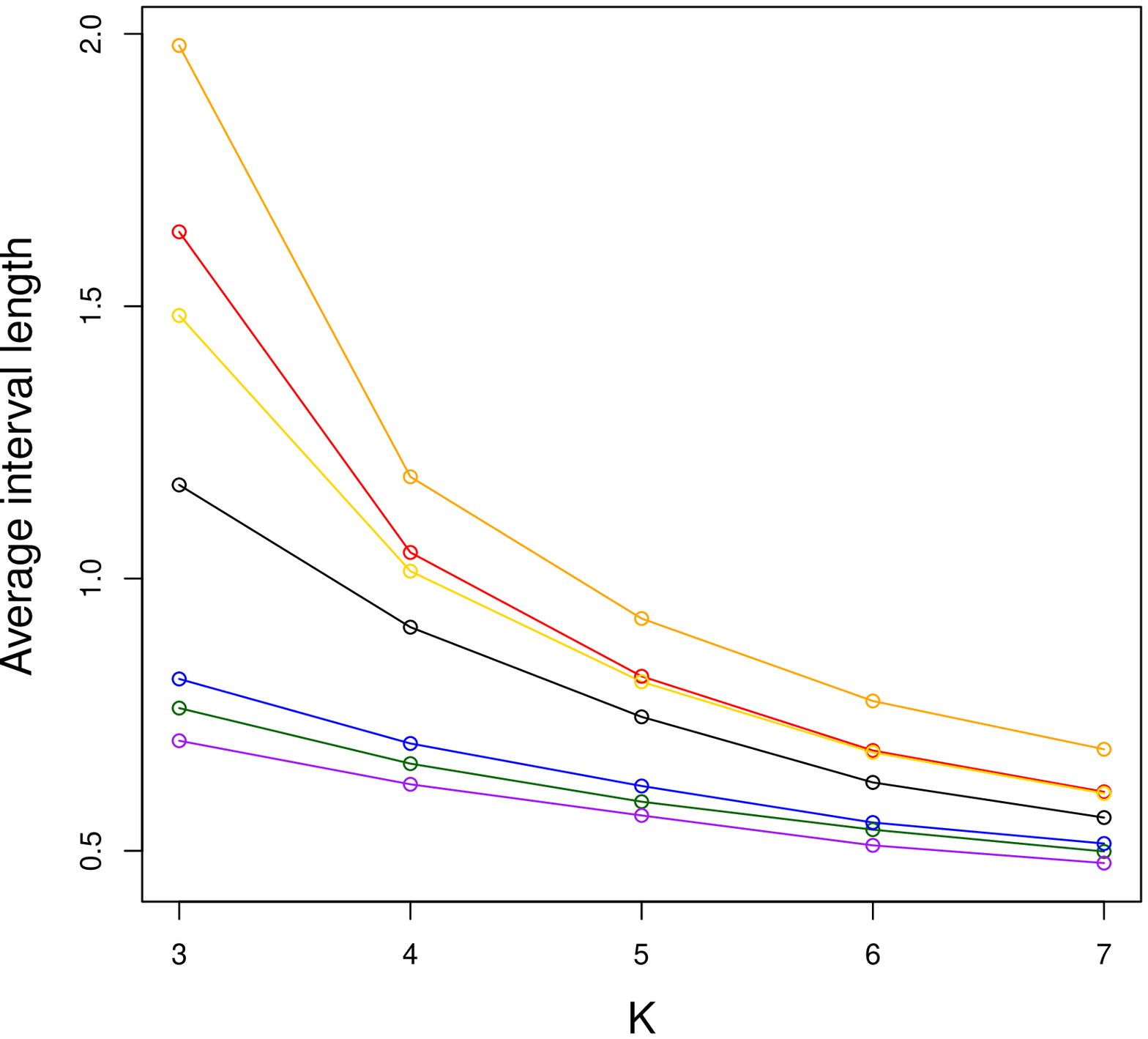}&
\includegraphics[width=.3\linewidth]{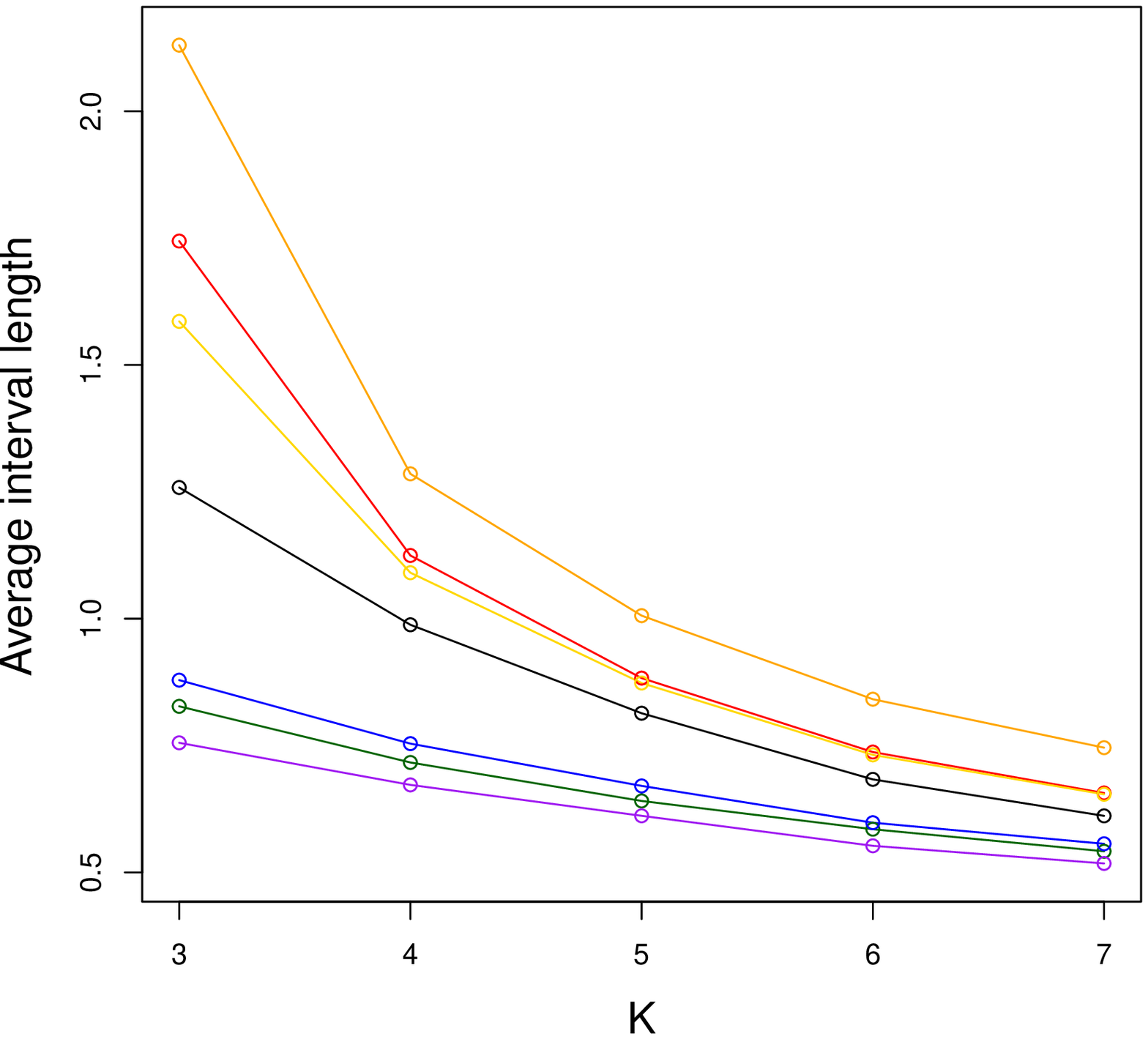}&
\includegraphics[width=.3\linewidth]{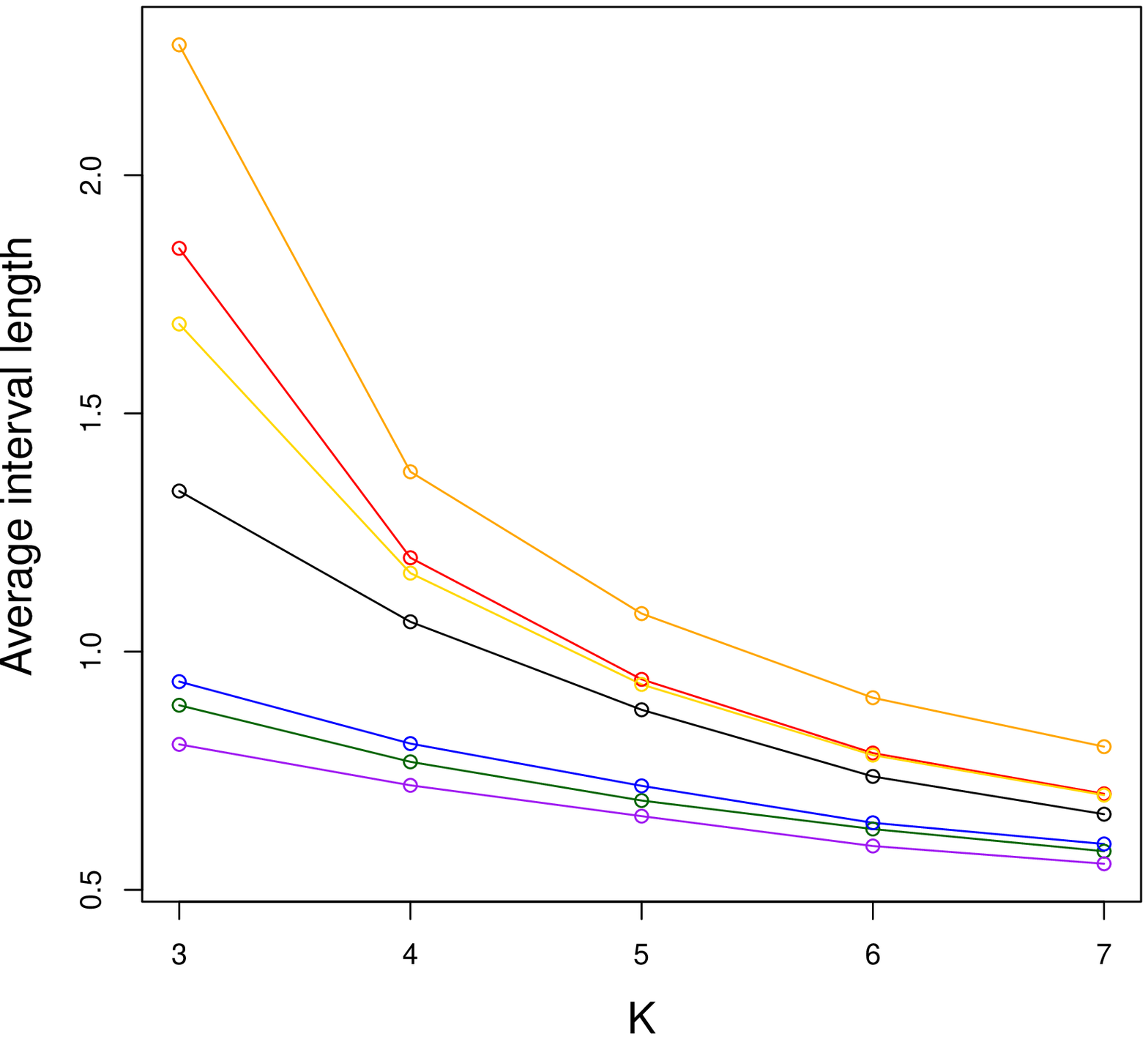}\\
\end{tabular}
\caption{Coverage probabilities and coverage lengths across 15 different simulation settings as done in \citet{guolo2012higher} for the number of studies available for meta-analysis $K$ and the level of heterogeneity $\nu$. Results for DL (in purple), LK (in blue), oracle (in green), SV (in yellow), EX (in orange), traditional Bayes (in red), and our proposed method (in black). From left to right, data are generated from a fixed between-study variance $\nu \in \{0.08, 0.10, 0.12\}$ across various number of studies included for meta-analysis.}
\label{fig:coverage_new}
\end{figure}

\section{Real data analyses}
\label{S:example}  

\subsection{Changes in bone mineral density}

To demonstrate how their inference procedure performs against popular meta-analytic techniques, \citet{michael2018} carry out four separate meta-analyses on 59 randomised trials presented in \citet{tai2015}. These meta-analyses differ in two categories: (1) the specific bones from which the outcome measure, or the change in bone mineral density (BMD), was measured; and (2) the number studies included. The first meta-analysis consisted of 27 separate trials in which BMD changes were taken from the lumbar spine, followed by a meta-analysis of six trials of BMD changes from the hip, five from the forearm, and three from the total body. As shown in Table~\ref{fig:exampletai}, for the first meta-analysis, our 95\% plausibility interval almost matches the exact confidence interval from \citet{michael2018}, which also approximately aligns with that of the classical DerSimonian--Laird approach. This is no surprise as the number of studies $K = 27$ itself is large. As for subsequent studies in which $K \leq 6$, the comparison between DerSimonial--Laird and the other two methods changes a lot and, in fact, sometimes leads to different scientific conclusions.  For example, one would conclude a significant change in BMD from the forearm and total body meta-analyses based on DerSimonian--Laird, but conclude no significant change based on our method and that of \citet{michael2018}.  Given that the latter two approaches have stronger theoretical support than the former, the difference in conclusions here might be indicative of the increased risk of false positives when using traditional meta-analytic techniques.


\begin{table}[t]
\caption{Four meta-analyses on the effect of calcium supplements in changes in bone mineral density from \citep{tai2015}. Intervals based on three methods---DerSimonian--Laird, Michael et al, and ours---are reported, with interval lengths as subscripts.}
\begin{center}
\begin{tabular}{lcccc}
\hline 
Study       & $K$  & DerSimonian--Laird                  & Michael et al               & Ours   \\ \hline 
Lumbar spine & 27 & $(0.828, 1.669)_{0.841}$ & $(0.768, 1.726)_{0.958}$  & $(0.811, 1.642)_{0.831}$  \\ 
Total hip    & 6  & $(0.502, 1.847)_{1.345}$ & $(0.159, 2.246)_{2.087}$  & $(0.319, 2.131)_{1.812}$  \\ 
Forearm      & 5  & $(0.209, 3.378)_{3.169}$ & $(-0.459, 4.124)_{4.583}$ & $(-0.426, 4.625)_{5.052}$ \\ 
Total body   & 3  & $(0.268, 1.778)_{1.511}$ & $(-0.740, 2.796)_{3.536}$ & $(-0.486, 2.568)_{3.054}$ \\ \hline
\end{tabular}
\end{center}
\label{fig:exampletai}
\end{table}

\subsection{Risk of acute myocardial infarction}
\label{S:magnesium} 

Here we consider a controversial example, one that called to question the use of meta-analyses in general \citep{egger1995, flather1997}. \citet{teo1991} conducted a meta-analysis of seven clinical trials that examined mortality across 1301 patients, 657 of which received intravenous magnesium therapy within 12 hours of hospitalization for acute myocardial infarction and 644 of which did not. In the original work, a fixed-effect method was used to combine the results from these seven randomized trials and arrive at a common odds ratio of 0.47, with 95\% confidence interval $(0.28, 0.79)$---suggesting magnesium therapy to be highly effective in reducing mortality among this specific patient population. The expected drop in mortality, however, was refuted in a large-scale 58,050-patient follow-up study \citep{isis1995} that estimated a common odds ratio of 1.06 with 95\% confidence interval $(1.00, 1.12)$. As a result, researchers raised concerns about meta-analytic techniques in general, citing issues around publication biases \citep{yusuf1995} and high heterogeneity between studies \citep{flather1997}. To address these problems, the canonical recommendation was to conduct sensitivity analyses via the use of multiple meta-analysis procedures, like that discussed below.  Had such a precaution been taken, the fact that the \citet{isis1995} study lead to an alternative conclusion would not have been unforeseen. 

It is also worth noting here that since the raw observations recorded in \citet{teo1991} are in the form of a dichotomous outcome variable, we take the logarithm of the common odds ratio, between the mortality rate of patients that receive magnesium therapy and that of patients that do not, in order to conduct our meta-analysis as described in the competing procedures above.  While there are other simplifications, e.g., \citet{van1993bivariate}, we subscribe to the rationale in \citet{dersimonian1986} that regards the distribution of the log odds as approximately normal.  Figure~\ref{fig:magnesium} thus compares the resulting interval estimates based on several meta-analytic procedures, namely, those assessed in Section~\ref{S:sims}.  Note that the DerSimonian--Laird and signed profile log likelihood intervals approximate the original results from \citet{teo1991}. However, our proposed approach, along with the full Bayesian, the higher-order likelihood, and that in \citet{michael2018}, result in an odds ratio interval that suggests magnesium therapy does not significantly affect the short-term mortality of patients with acute myocardial infarction.


\begin{figure}
\centering
\begin{minipage}{.5\textwidth}
  \centering
  \includegraphics[width=1\linewidth]{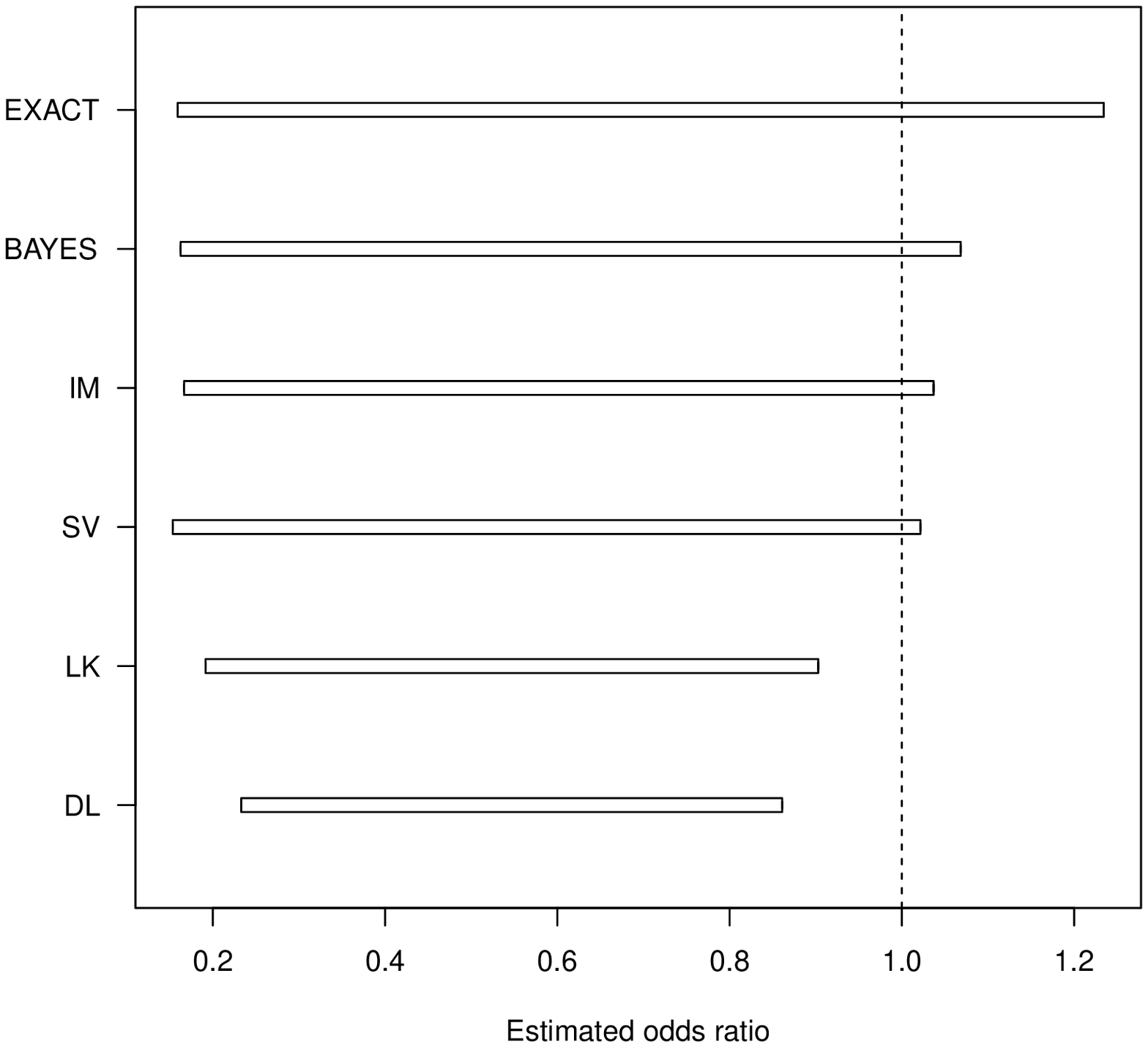}
\end{minipage}%
\begin{minipage}{.5\textwidth}
  \centering
  \includegraphics[width=1\linewidth]{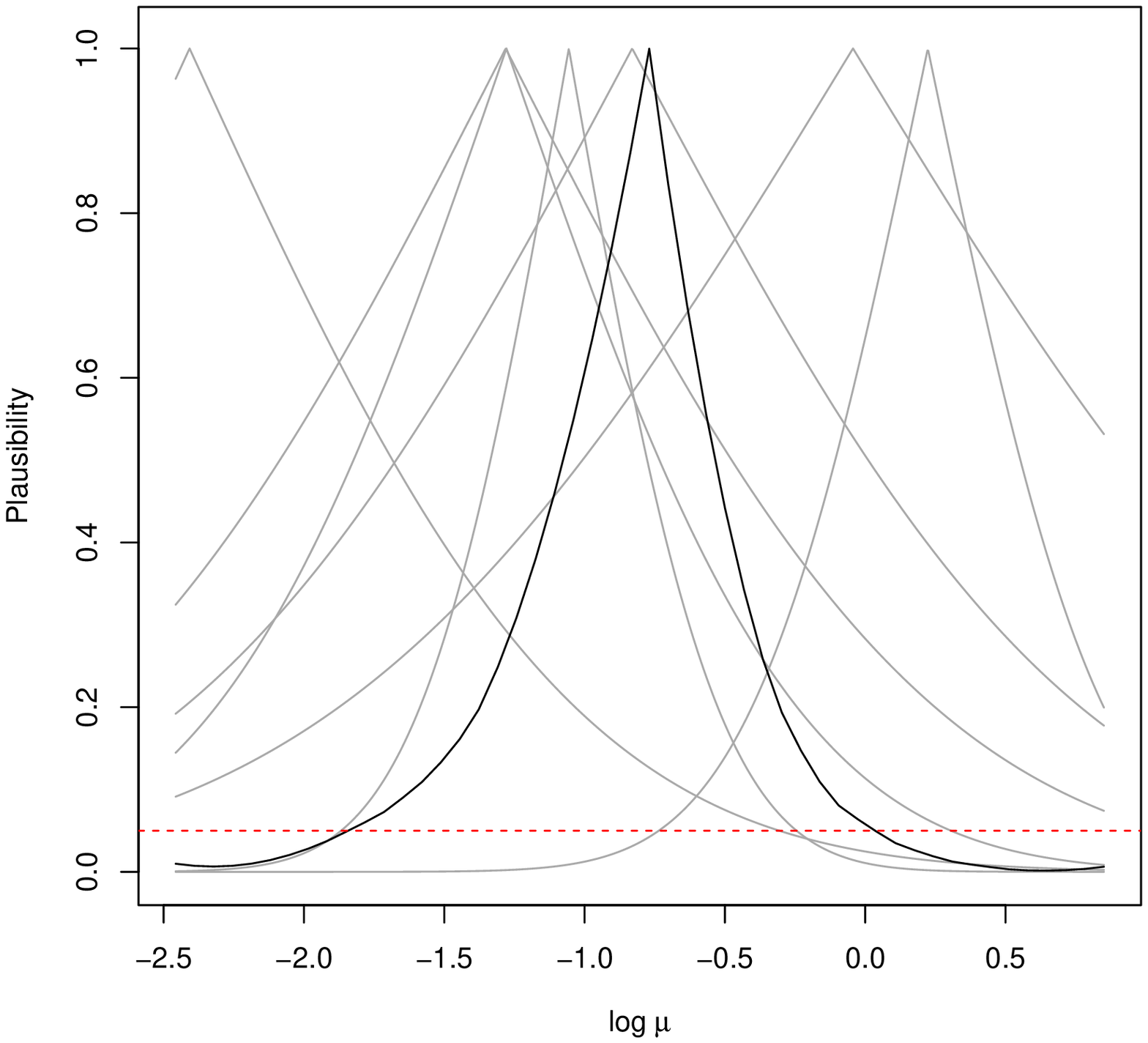}
\end{minipage}
\caption{(Left) Estimated odds ratio intervals for various meta-analytic techniques in the example described in Section~\ref{S:magnesium}. (Right) Combined plausibility function and respective IM interval on the log scale.}%
\label{fig:magnesium}
\end{figure}

\section{Conclusion}
\label{S:discuss}  

In this paper, we have considered an important and challenging problem, namely, valid statistical inference for meta-analyses that combine only a few studies.  Again, the main obstacle is in dealing with the unknown between-study variance, in which there is only limited information in the few studies being combined.  Our proposed solution is based on a recently proposed generalized inferential model framework, and we harness the power of profiling to construct a generalized association that is ``almost'' independent of the nuisance between-study variance.  From there, we can use the exact distribution of the profile likelihood ratio, as the lack of sensitivity to the nuisance parameter means that it is not necessary to have an accurate plug-in estimator to achieve near-exact inference.  In our numerical comparisons, we have demonstrated that the proposed inferential model solution outperforms existing methods in the literature, by being nearly exact and more efficient across a wide range of simulation settings, with few studies and both large and small between-study variance.  

Given the strong performance in this application, it is natural to consider using the same approach to solve other challenging problems.  One that we have recently considered is when data come from a parametric model are corrupted by random censoring.  The classical solution to this problem relies on the asymptotic normality of maximum likelihood estimators and, therefore, can only give approximately valid inference in an asymptotic sense.  But the use of a likelihood ratio effectively marginalizes out the nuisance censoring distribution, so we end up in a position similar to that encountered in the present paper, the key difference being that the nuisance parameter is infinite-dimensional, which creates computational challenges.  Preliminary results on this can be found in \citet{cahoon.martin.isipta} and more details are forthcoming.  




\appendix

\section{Proof of Theorem~\ref{thm:valid}}
\label{S:proof}

By definition of $\mpl_{Y^K}$ in \eqref{eq:mpl}, it is enough to show that  $G_{\hat\nu_\mu}^K(T_{Y^K}(\mu))$ converges in distribution to $\unif(0,1)$ under $\prob_{Y^K|\mu,\nu}$; note that here we insert the superscript ``$K$'' to highlight the dependence on the number of studies, $K$.  Then we can write $G_{\hat\nu_\mu}^K(T_{Y^K}(\mu)) = G_\nu^K(T_{Y^K}(\mu)) + \Delta_K$, where 
\[\Delta_K = G_\nu^K(T_Y(\mu)) - G_{\hat\nu_\mu}^K(T_Y(\mu)),\]
with $\hat\nu_\mu$, the maximum likelihood estimate of the heterogeneity parameter at a fixed $\mu$, and $\nu$, the true heterogeneity between studies.  The key observation is that $G_\nu^K(T_{Y^K}(\mu))$ is exactly uniformly distributed under $\prob_{Y^K|\mu, \nu}$, so if we can show $\Delta_K \to 0$ in $\prob_{Y^K|\mu, \nu}$-probability, then the claim follows from Slutsky's theorem.

Towards this, we clearly have
\[|\Delta_K| \leq \sup_{t\in [0, 1]} |G_\nu^K(t) - G_{\hat\nu_\mu}^K(t)|,\]
so we can prove our claim by showing the difference between the two distribution functions vanishes uniformly.  But since these are distribution functions, it is enough to show that the difference vanishes pointwise, at each fixed $t$.  Towards this, according to \citet{guolo2012higher}, the meta-analysis problem is sufficiently regular that the classical first-order distribution theory applies; see, e.g., \citet[][Sec.~4.6]{severini1992profile}.  In particular, this implies $\hat\nu_\mu = \nu + O_P(K^{-1/2})$ which, in turn, implies that $\prob_{Y^K|\mu,\nu}$ and $\prob_{Y^K|\mu, \hat\nu_\mu}$ are mutually contiguous.  Then the classical Wilks's theorem gives us 
\begin{equation}
-2\log T_{Y^K}(\mu) \to \chisq(1) \quad \text{in distribution, as $K \to \infty$},
\label{eq:chisq}
\end{equation}
under {\em both} $\prob_{Y^K|\mu, \nu}$ and $\prob_{Y^K|\mu, \hat\nu_\mu}$.  
Therefore, 
\[G_\nu^K(t) \to G^\infty(t) \quad \text{and} \quad G_{\hat\nu_\mu}^K(t) \to G^\infty(t), \]
where $G^\infty$ is the limiting distribution function of $T_{Y^K}(\mu)$ from \eqref{eq:chisq}.  If we then express
\[ |G_\nu^K(t) - G_{\hat\nu_\mu}^K(t)| \leq |G_\nu^K(t) - G^\infty (t)| + |G_{\hat\nu_\mu}^K(t) - G^\infty (t)|,\]
we see the right-hand converges to $0$ in $\prob_{Y|\mu, \nu}$-probability as $K \to \infty$.  This implies $\Delta_K \to 0$ from which $G_{\hat\nu_\mu}^K(T_{Y^K}(\mu)) \to \unif(0,1)$ follows by Slutsky's theorem.

\bibliographystyle{apa}
\bibliography{ref.bib}

\end{document}